\newtheorem{theorem}{Theorem}
\newtheorem{proposition}{Proposition}
\newcommand{\E}{\text{E}}
\newcommand{\q}{\text{q}}
\renewcommand{\P}{\text{P}}
\renewcommand{\v}{\text{vec}}
\begin{document}
%
\title{Anomaly Detection in Partially Observed Traffic Networks}
%
%
%

\author{Elizabeth~Hou,~\IEEEmembership{Student Member,~IEEE,}
	Yasin~Y{\i}lmaz,~\IEEEmembership{Member,~IEEE,} and~Alfred~O. Hero,~\IEEEmembership{Fellow,~IEEE}
	\thanks{Elizabeth Hou and Alfred Hero are with the EECS Department at University of Michigan, Ann Arbor, MI, Contact email: emhou@umich.edu }
	\thanks{Yasin Y{\i}lmaz was with the EECS Department at the University of Michigan, Ann Arbor, MI. He is now with the Electrical Engineering Department at the University of South Florida, Tampa, FL.}}

\maketitle

\begin{abstract}
This paper addresses the problem of detecting anomalous activity in traffic networks where the network is not directly observed. Given knowledge of what the node-to-node traffic in a network should be, any activity that differs significantly from this baseline would be considered anomalous. We propose a Bayesian hierarchical model for estimating the traffic rates and detecting anomalous changes in the network. The probabilistic nature of the model allows us to perform statistical goodness-of-fit tests to detect significant deviations from a baseline network. We show that due to the more defined structure of the hierarchical Bayesian model, such tests perform well even when the empirical models estimated by the EM algorithm are misspecified. We apply our model to both simulated and real datasets to demonstrate its superior performance over existing alternatives.
\end{abstract}

\begin{IEEEkeywords}
anomaly detection, latent variable model, EM algorithm, minimum relative entropy, hypothesis testing
\end{IEEEkeywords}

%
\IEEEpeerreviewmaketitle

\section{Introduction}

In today's connected world, communication is increasingly voluminous, diverse, and essential. Phone calls, delivery services, and the Internet are all modern amenities that send massive amounts of traffic over immense networks. Thus network security, such as the ability to detect network intrusions or illegal network activity, plays a vital role in defending these network infrastructures. For example, (i) computer networks can protect themselves from malware such as botnets by identifying unusual network flow patterns; (ii) supply chains can prevent cargo theft by monitoring the schedule of shipments or out-of-route journeys between warehouses; (iii) law enforcement agencies can uncover smuggling operations by detecting alternative modes of transporting goods.

Identifying unusual network activity requires a good estimator of the true network traffic, including the anomalous activity, in order to distinguish it from a baseline of what the network should look like. However, often it is not possible to observe the network directly due to constraints such as cost, protocols, or legal restrictions. This makes the problem of estimating the rate of traffic between nodes in a network difficult because the edges between nodes are latent unobserved variables. Network tomography approaches have been previously proposed for estimating network topology or reconstructing link traffic from incomplete measurements and limited knowledge about network connectivity. However for network anomography, the detection of anomalous deviations of traffic in the network, highly accurate estimation of all network traffic may not be necessary. It often suffices to detect perturbations within the network at an aggregate or global scale.  This paper addresses the problem of network anomography rather than that of network tomography or traffic estimation.

\subsection{Related Work} 

Broadly defined, the network tomography problem is to reconstruct complete network properties, e.g., source-destination (SD) traffic or network topology,  based on incomplete data. The term ``network tomography" was introduced in \cite{Vardi1996} where the objective is to estimate unknown source destination traffic intensities given observations of link traffic and known network topology. Since the publication of \cite{Vardi1996}, the scope of the term network tomography has been used in a much broader sense (see the review papers \cite{coates2002internet, Medina:2002:TME:633025.633041, castro2004network}, and \cite{lawrence2006network}). For example, a variety of passive or active packet probing strategies have been used for topology reconstruction of the Internet, including unicast, multicast, or multi-multicast \cite{coates2002maximum, caceres1999multicast}, and \cite{rabbat2004multiple}; or using different statistical measures including packet loss, packet delay, or correlation \cite{tsang2003network, shih2003unicast, shih2007hierarchical}, and \cite{duffield2006network}. 

In the formulation of \cite{Vardi1996}, the network tomography objective is to determine the total amount of traffic between SD pairs given knowledge of the physical network topology and the total amount of traffic flowing over links, called the link data. This leads to the linear model for the observations $ \bm{y}^t = \bm{A} \bm{x}^t $ where $ \bm{A} $ is the known routing matrix defining the routing paths, and at each time point $t$, $ \bm{y}^t $ is a vector of the observed total traffic on the links and $ \bm{x}^t $ is a vector of the unobserved message traffic between SD pairs. Using the model that the elements of $ \bm{x}^t $ are independent and Poisson distributed, an expectation-maximization (EM) maximum likelihood estimator (MLE) and a method of moments estimator are proposed in \cite{Vardi1996} for the Poisson rate parameters $ \bm{\lambda} $. The authors of \cite{Tebaldi} propose a Bayesian conditionally Poisson model, which uses a Markov chain Monte Carlo (MCMC) method to iteratively draw samples from the joint posterior of $ \bm{\lambda} $ and $\bm{x}$. The authors of \cite{Cao00ascalable} and \cite{Cao00time-varyingnetwork} assume the message traffic is instead from a Normal distribution, obtaining a computationally simpler estimator of the SD traffic rates. The authors of \cite{8101008} relax the assumption that the traffic is an independent and identically Poisson distributed sequence and instead consider the network as a directly observable Markov chain. Under this weaker assumption, they derive a threshold estimator for the Hoeffding test in order to detect if the network contains anomalous activity.

In \cite{Vanderbei94anem} the authors propose an EM approach for Poisson maximum likelihood estimation when the network topology is unknown; however, their solution is only computationally feasible for very small networks and it does not account for observations of traffic through interior nodes. This has led to simpler and more scalable solutions in the form of gravity models where the rate of traffic between each SD pair is modeled by $ x_{sd} = (N_s N_d) / N $ where $ N_s $ and $ N_d $ are the total traffic out of the source node and into the destination node respectively and $ N $ is the total traffic in the network. Standard gravity models do not account for the interior nodes, thus in \cite{Zhang:2003:FAC:781027.781053} and \cite{Zhang:2003:IAT:863955.863990} tomogravity and entropy regularized tomogravity models were proposed, which incorporate the interior node information in the second stage of their algorithm. The authors of \cite{6058636} generalize the tomogravity model from a rank one (time periods are independent) to a low rank approximation (time periods are correlated) and allow additional observations on individual SD pairs. Similarly, the authors of \cite{6497613} and \cite{7098434} use a low rank model with network traffic maps to incorporate a sparse anomaly matrix, and they solve their multiple convex objectives with the alternating direction method of multipliers (ADMM) algorithm. 

Dimensionality reduction has also been used directly for anomaly detection in the SD traffic flows in networks. Under the assumption that traffic links have low rank structure, the authors in \cite{lakhina2004characterization} and \cite{lakhina2004diagnosing} use Principle Component Analysis (PCA) to separate the anomalous traffic from the nominal traffic. This low rank framework is generalized to applying PCA in networks that are temporally low rank or have dynamic routing matrices, in \cite{zhang2005network}. The authors of \cite{zhang2005network} also coin the term ``network anomography" to reflect the influence of network topology reconstruction, which is a necessary component to detecting anomalies in a network with unknown structure. However, later work in \cite{Ringberg:2007} discusses the limitations of PCA for detecting anomalous network traffic, e.g., it is sensitive to (i) the choice of subspace size; (ii) the way traffic measurements are aggregated; (iii) large anomalies. The low rank plus sparse framework is extended to online setting with a subspace tracking algorithm in \cite{7536642}.

Specifically for Internet Protocol (IP) networks, some works prefer to perform anomaly detection on the flows from the IP packets instead of the SD flows. The authors of \cite{li2006detection} use PCA to separate the anomalous and nominal flows from sketches (random aggregations of IP flows) while the authors of \cite{krishnamurthy2003sketch} model the sketches as time series and detect change points with forecasting. The works of \cite{thottan2003anomaly} and \cite{gu2005detecting} also perform change point detection using windowed hypothesis testing with generalized likelihood ratio or relative entropy respectively.

Because our approach in this paper is based on traffic networks or SD models, these types of approaches were the focus of our related works subsection. However, networks can also be represented as graph models or as features of the network characteristics. This subsection would be incomplete if it did not mention anomaly detection approaches to other types of network models. So, we refer to some survey papers that cover many of the recent techniques in graph based approaches: \cite{ranshous2015anomaly} and \cite{akoglu2015graph}. In particular, similar to the low rank approaches for SD networks, there are low rank approaches to graph models such as \cite{egilmez2014spectral} who assume the inverse covariance matrix of their wireless sensor network data has a graph structure and solve a low rank penalized Gaussian graphical model problem and \cite{8323201} who impose graph smoothness by a low rank assumption on graph Laplacian of the features of the network. \cite{lee2013anomaly} also uses a low rank approach on their KDD intrusion data set, but they directly apply the low rank assumption to the network characteristics of their data.

\subsection{Our Contribution}

In this paper, we consider networks where an exterior node (a node in an SD pair) only transmits and receives messages from a few other nodes, but because we cannot observe the network directly, we do not know which SD pairs have traffic and which do not. Thus, we develop a novel framework to detect anomalous traffic in sparse networks with unknown sparsity pattern. Our contributions are the following. 1) In order to estimate the network traffic, we propose a parametric hierarchical model that alternates between estimating the unobserved network traffic and optimizing for the best fit rates of traffic using the EM algorithm. 2) We warm-start the algorithm with the solution to non-parametric minimum relative entropy model that directly projects the rates of traffic onto the nearest attainable sparse network. 3) Since we do not make assumptions of fixed edge structure in our model, it allows us to accommodate the possibility of anomalous edges in the actual network structure because anomalies will never be known in advance. 4) Using our probabilistic model's estimator of actual traffic rates, we test for anomalous network activity by comparing it to a baseline to determine which deviations are anomalies and which are estimation noise. We develop specific statistical tests, based on the generalized likelihood ratio framework, to control for the false positive rate of our probabilistic model, and show that even when our models are misspecified, our tests can accurately detect anomalous activity in the network.

The rest of the paper is organized in the following way. Section II proposes a problem formulation of the network we are interested in and our assumptions about it. Section III describes our proposed hierarchical Bayesian model, which is solved with a generalized EM algorithm and warm-starting the EM with a solution that satisfies the minimum relative entropy principle. Section IV describes our anomaly detection scheme through statistical goodness of fit tests and Section V describes the computational complexity of our method. Section VI contains simulation results of the performance of our proposed estimators and applications to the CTU-13 dataset of botnet traffic and a dataset of NYC taxicab traffic. Finally, Section VII concludes the paper.

\section{Proposed Formulation} 

We give a simple diagram of a notional network in Fig.~\subref*{fig:oracle}. An exterior node, $V_i$, sends messages, $N^t_{ij}$, at a rate, $\Lambda_{ij}$, to another exterior node, $V_j$, at each time point, $t$. Messages can flow through interior nodes, such as $U_1$, but the interior nodes do not absorb or create messages. Because the magnitude of flow is just the total number of messages that have been sent from one node to another, network traffic between nodes is a counting process. For tractability, it is common to assume the messages are independent and identically distributed (i.i.d.) and the total number of messages in a time period is from some parametric distribution. The Poisson distribution is the most natural choice because it models events occurring independently with a constant rate, and it is used by \cite{Vardi1996}, \cite{Vanderbei94anem}, \cite{Tebaldi}, \cite{Cao00ascalable}, and \cite{Cao00time-varyingnetwork} although the latter two works use a Normal approximation to the Poisson for additional tractability. Under these Poisson process assumptions, the uniformly minimum variance unbiased estimator is simply the maximum likelihood estimator (MLE). 

However, this is a very strong and unrealistic assumption because it would require being able to track every single message being passed in the network. Thus, we are interested in the much weaker assumption that we can only monitor the nodes themselves. Fig.~\subref*{fig:actual} shows what we can actually observe from the network under this weaker assumption. While we also observe the total amount of traffic, unlike in \cite{Vardi1996}, we do not know the network topology.

Since we can only monitor the nodes, we can only observe the total ingress and egress of the exterior nodes. Thus we know an exterior node, $V_i$, transmits $N^t_{i\cdot}$ messages and receives $N^t_{\cdot i}$ messages, but we do not know which of the other nodes it is interacting with. We can also observe the flow through interior nodes, but we cannot distinguish where the messages come from or are going to. For instance, in Fig.~\subref*{fig:oracle}, an interior node, such as $U_1$, will observe all messages, $F_1^t = N^t_{14} + N^t_{2P} $, that flow through it, but it will not be able to distinguish the number of messages from each SD pair or whether all the SD pairs actually send messages.

\begin{figure}[h]
	\centering
	\subfloat[Proposed Network: $V_i$ - exterior nodes, $U_i$ - interior nodes, $N^t_{ij}$ - messages from node $i$ to node $j$ at time point $t$]{\includegraphics[width=\linewidth]{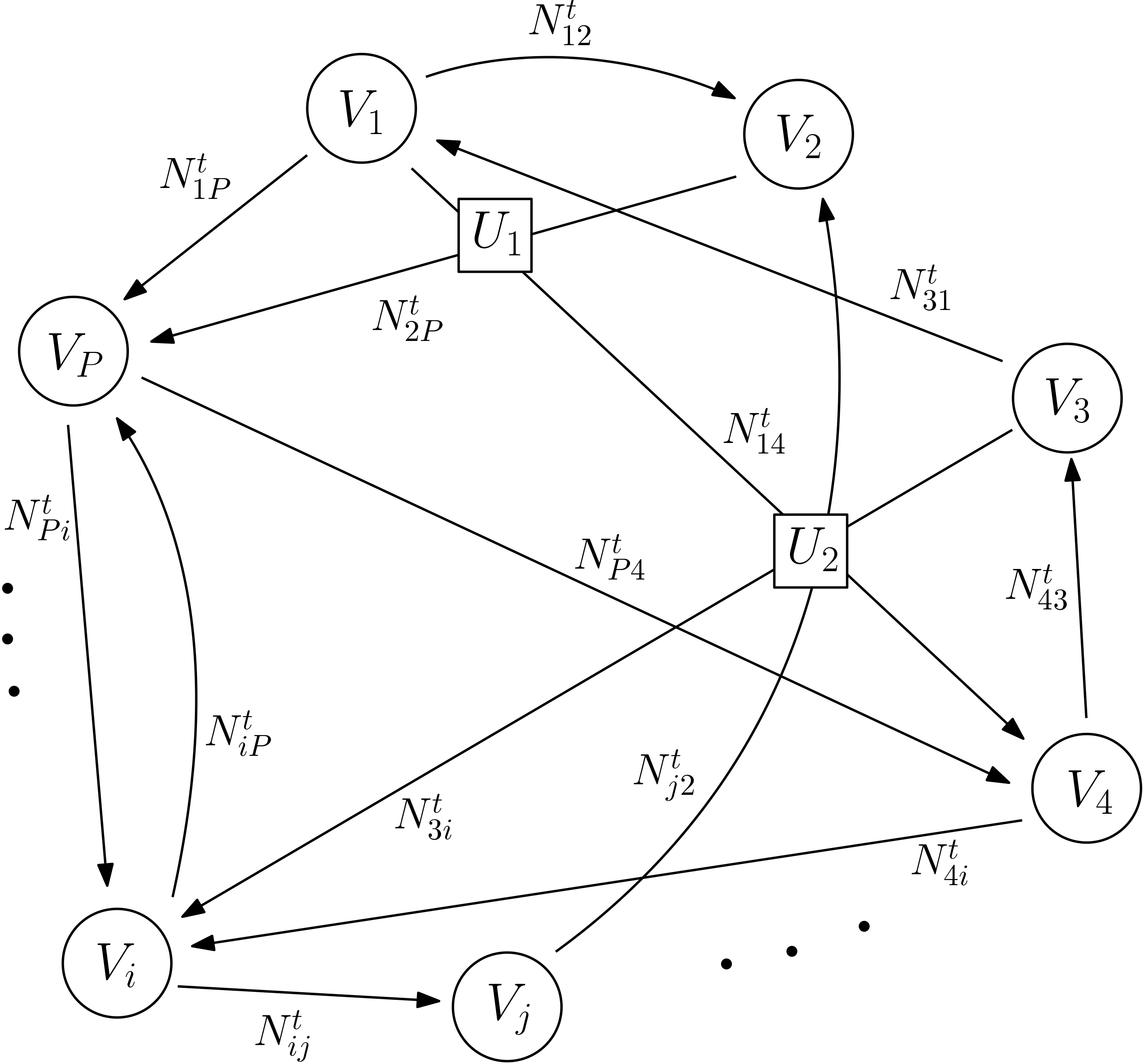}\label{fig:oracle}}\hfil
	\subfloat[Actual Observed Network: $N^t_{i\cdot}$ - total egress of exterior nodes, $N^t_{\cdot i}$ - total ingress of exterior nodes, $F^t_i$ - total flow through interior nodes]{\includegraphics[width=\linewidth]{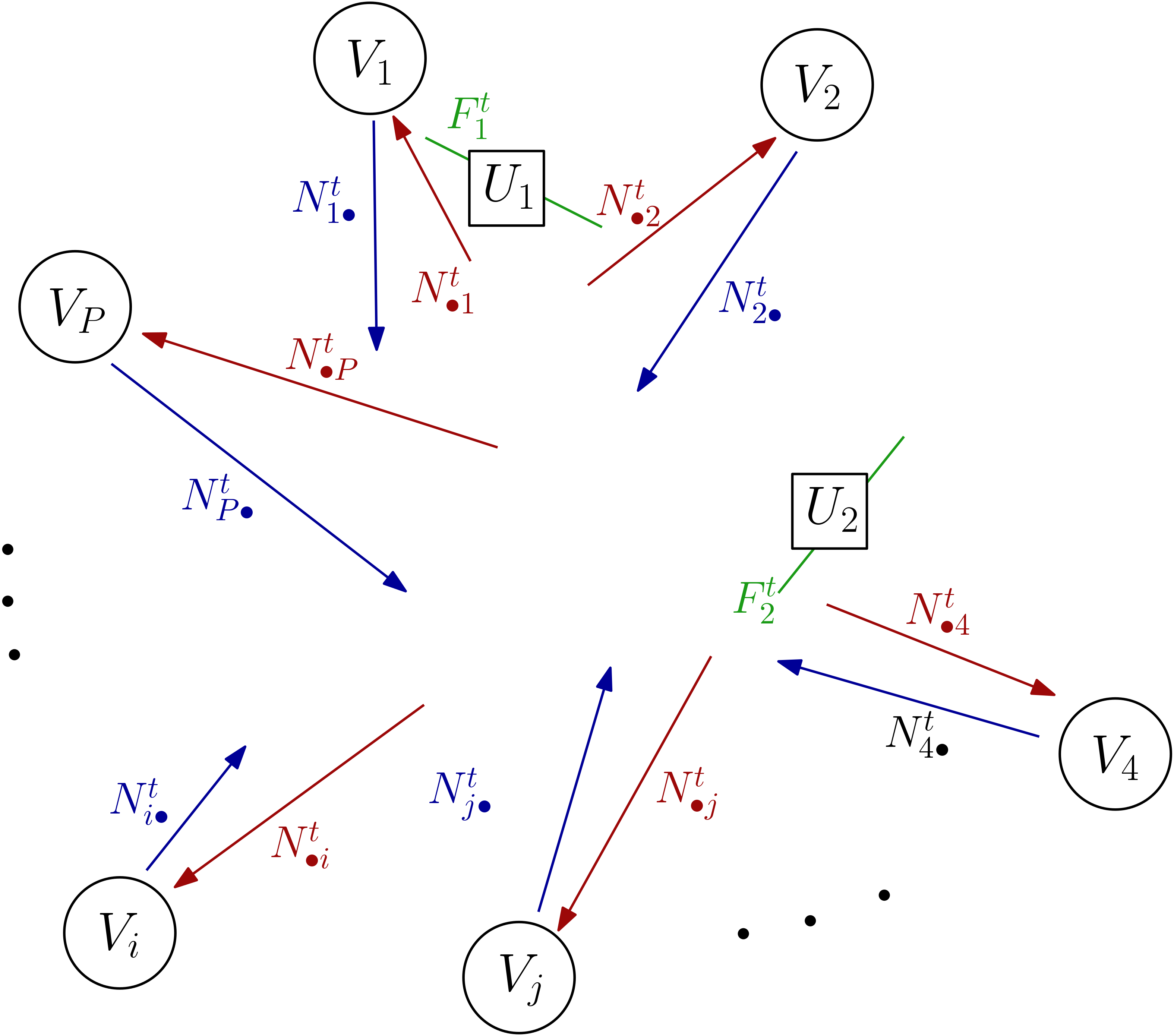}\label{fig:actual}}
	\caption{Diagram of a network with $P$ exterior nodes and 2 interior nodes.}
	\label{fig:problem}
\end{figure}

A network with $P$ exterior nodes can naturally be mathematically formulated as a $P \times P$ matrix, which is observed $T$ times. Let $\bm{N}^t$ be the unobserved traffic matrix at time instance $t$ and let the elements of the matrix, $N^t_{ij}$, be the amount of traffic between nodes $i$ and $j$. The row and column sums of the traffic are denoted by $\bm{R} = [ N_{1.} \ldots N_{P.} ]' $ and $\bm{C} = [ N_{.1} \ldots N_{.P} ]'$ respectively, and $ \bm{F} = [ F_{h} ]$ are the observed flows through interior nodes, which are indexed by $h$. The traffic at each time instance $t$ is generated from a distribution with mean $ \bm{\Lambda} $, the true intensity/rate parameter of the matrix, and $ \bm{\Lambda}_0 $ is the baseline parameter of a network without any anomalies. This mathematical formulation is shown below.

\begin{minipage}{.4\linewidth}
	\begin{equation*} 
	\hspace{-10pt} \bm{N}^t \hspace{-3pt} = \hspace{-3pt} \left[ \begin{array}{ccccc}
	0 & N^t_{12} & N^t_{13} & \cdots & N^t_{1P} \\
	N^t_{21} & 0 & N^t_{23} & \cdots & N^t_{2P} \\
	N^t_{31} & N^t_{32} & 0 & \cdots & N^t_{3P} \\
	\vdots & \vdots & \vdots & \ddots & \vdots \\
	N^t_{P1} & N^t_{P2} & N^t_{P3} & \cdots & 0
	\end{array} \right]
	\end{equation*}
\end{minipage} \hspace{45pt} 
\begin{minipage}{.4\linewidth}
	\centering
	\smallskip
	\underline{Observations}
	\smallskip
		\begin{itemize}
			\item[] $ N^t_{.j} = \sum_{i=1}^P N^t_{ij} $
			\item[] $ N^t_{i.} = \sum_{j=1}^P N^t_{ij} $
			\item[] $ F^t_{h} = \sum N^t_{ij} $
			\item[] $ N^t_{ij} \text{ for some } ij $
		\end{itemize}

\end{minipage} \\

We assume \textit{a priori} that the distribution of the rate matrix is centered around some baseline rate matrix $\bm{\Lambda_0}$, which are the assumed rates when there is no anomalous activity. We then update this prior distribution using the observations $\mathcal{D} = \{\bm{R}^t, \bm{C}^t, \bm{F}^t \}_{t=1}^T$ in order to get a distribution of the rates $ \P(\bm{\Lambda} | \mathcal{D}) $, which does account for potential anomalous activity.

\section{Hierarchical Poisson Model with EM}

We propose a generative model that assumes a series of statistical distributions govern the generation of the network. We assume that the messages $N^t_{ij}$ passed through the network are Poisson distributed with rates $\Lambda_{ij}$. However, because we cannot observe the traffic network directly, we do not have the complete Poisson likelihood and use the EM algorithm. In the following subsections, we will show a series of generative models with increasing complexity that attain successively higher accuracy. Then we will discuss warm-starting the EM algorithm at a robust initial solution to compensate for its sensitivity to initialization. 

\subsection{Proposed Hierarchical Bayesian Model}

\subsubsection{Maximum Likelihood by EM}

The simplest hierarchical model assumes all priors are uniform, thus the only distributional assumption is that likelihood $ \P(\bm{N}^1, \dots, \bm{N}^T| \bm{\Lambda}) $ is $\prod_{t=1}^T \prod_{ij} Poisson(\Lambda_{ij}) $. The maximum likelihood estimator for the Poisson rates $\bm{\Lambda}$ can be approximated by lower bounds of the observed likelihood $ \P(\mathcal{D}| \bm{\Lambda}) $ using the maximum likelihood expectation maximization (MLEM) algorithm. The MLEM alternates between computing a lower bound on the likelihood function $ \P(\mathcal{D}| \bm{\Lambda})$, the E-step, and maximizing the lower bound, the M-step. A general expression for the E-step bound can be expressed as:  
\begin{flalign} \label{MLE_low_bound}
\hspace{-3pt} \log \P(\mathcal{D}| \bm{\Lambda}) \hspace{-1pt} \geq \hspace{-1pt} \sum_{t=1}^T \hspace{-1pt} \E_{\q^t} \hspace{-1pt} \left(\log \P(\bm{R}^t, \bm{C}^t, \bm{F}^t, \bm{N}^t | \bm{\Lambda}) \right) \hspace{-1pt} + \text{H}(\q^t) \hspace{-1pt}
\end{flalign}
where $\q^t$ is an arbitrarily chosen distribution of $\bm{N}^t $, $\E_{\q^t}$ denotes statistical expectation with respect to the reference distribution $\q^t$, and  $\text{H}(\q^t)$ is the Shannon entropy of $\q^t$. The choice of $\q^t$ that makes the bound \eqref{MLE_low_bound} the tightest, and results in the fastest convergence of the MLEM algorithm, is $ \q^t = \P(\bm{N}^t| \bm{R}^t, \bm{C}^t, \bm{F}^t, \bm{\Lambda}), $ (see Section 11.4.7 of \cite{murphy2012machine}); however, this is not a tractable distribution. When the observations consist of the row and column sums of the matrix $\bm{N}^t$, this distribution is the multivariate Fisher's noncentral hypergeometric distribution, and when the flows are also observed the distribution is unknown. Unfortunately, use of this optimal distribution leads to an intractable E-step in the MLEM algorithm due to the coupling (dependence) between the row and column sums of $\bm{N}^t$. As an alternative we can weaken the bound on the likelihood function \eqref{MLE_low_bound} by using a different distribution $\q$ that leads to an easier E-step. To this aim, we propose to use a distribution $\q$ that decouples the row sum from the column sum; equivalent to assuming that each sum is independent, e.g., as if each were computed with different realizations of $\bm{N}^t$.

\begin{proposition} \label{prop:indep_time_LB}
Assume $t_{1}, t_{2}$ and $t_{3}$ are different time points so that observations at these time points are independent 
\begin{flalign*}
\emph{P}(\mathcal{D}| \bm{\Lambda}) = \prod_{t_{1}=1}^T \emph{P}(\bm{R}^{t_{1}}| \bm{\Lambda}) \prod_{t_{2}=1}^T \emph{P}(\bm{C}^{t_{2}}| \bm{\Lambda}) \prod_{t_{3}=1}^T \emph{P}(\bm{F}^{t_{3}}| \bm{\Lambda}).
\end{flalign*}
Then the tightest lower bound of the observed data log likelihood is
\begin{flalign*}
& \log \emph{P}(\mathcal{D}| \bm{\Lambda}) \geq \sum_{\tau=1}^3 \sum_{t_{\tau}=1}^T \emph{H}(q^{t_{\tau}}) + \emph{E}_{q^{t_{\tau}}} \left( \log \emph{P}(\bm{N}^{t_{\tau}} | \bm{\Lambda}) \right)
\end{flalign*}
where $ q^{t_{1}} = \emph{P}(\bm{N}^{t_{1}} | \bm{R}^{t_{1}}, \bm{\Lambda})$, $q^{t_{2}} = \emph{P}(\bm{N}^{t_{2}} | \bm{C}^{t_{2}}, \bm{\Lambda})$, and $ q^{t_{3}}(\bm{N}^{t_{3}}) = \emph{P}(\bm{N}^{t_{3}} | \bm{F}^{t_{3}}, \bm{\Lambda}) $ are multinomial distributions.
\end{proposition}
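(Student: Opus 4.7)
The plan is to apply the standard variational (Jensen) argument to each factor in the assumed independent decomposition of $P(\mathcal{D}|\bm{\Lambda})$, and then to identify each tightness-achieving posterior as a multinomial via the classical Poisson-sum conditioning identity.

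First, I would take the logarithm of the factorization hypothesized in the proposition, splitting the observed log likelihood into three decoupled sums indexed by $t_1$, $t_2$, and $t_3$. Because the three products are assumed independent, the problem of tightening the overall bound decomposes into three independent sub-problems, one for each of $\log \text{P}(\bm{R}^{t_1}|\bm{\Lambda})$, $\log \text{P}(\bm{C}^{t_2}|\bm{\Lambda})$, and $\log \text{P}(\bm{F}^{t_3}|\bm{\Lambda})$. This decoupling is what distinguishes the present bound from the tighter but intractable one in \eqref{MLE_low_bound}.

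Next, for a representative factor I would marginalize over the latent $\bm{N}^{t_\tau}$ and apply Jensen's inequality against an arbitrary variational distribution $q^{t_\tau}$. For the row-sum factor this gives
\[
\log \text{P}(\bm{R}^{t_1}|\bm{\Lambda}) \geq \E_{q^{t_1}}\bigl[\log \text{P}(\bm{R}^{t_1}, \bm{N}^{t_1}|\bm{\Lambda})\bigr] + \text{H}(q^{t_1}),
\]
and since $\bm{R}^{t_1}$ is a deterministic function of $\bm{N}^{t_1}$, the joint collapses to $\text{P}(\bm{N}^{t_1}|\bm{\Lambda})$ on the support consistent with the observed row sums. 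Identical manipulations produce the bounds for $\bm{C}^{t_2}$ and $\bm{F}^{t_3}$. Tightness follows from the standard observation that the slack in Jensen's inequality equals the Kullback-Leibler divergence between $q^{t_\tau}$ and the posterior $\text{P}(\bm{N}^{t_\tau}|\text{obs}^{t_\tau},\bm{\Lambda})$; setting $q^{t_\tau}$ equal to that posterior drives the slack in each factor to zero, and because the decomposition is independent across $\tau$ these three minimizations can be performed simultaneously to yield the globally tightest bound.

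Finally, I would identify each posterior as multinomial using the classical identity that if $X_1,\ldots,X_k$ are independent Poisson variables with rates $\mu_i$, then conditioned on their sum $S$ the vector $(X_1,\ldots,X_k)$ is $\text{Multinomial}(S; \mu_i/\sum_j \mu_j)$. Applied row-wise to $\bm{N}^{t_1}$ given $\bm{R}^{t_1}$, and using the fact that distinct rows are \emph{a priori} independent Poisson vectors, this produces the multinomial form of $q^{t_1}$; the column case is symmetric. For $q^{t_3}$, the same identity is applied to the subset of $(i,j)$ pairs whose routing paths traverse interior node $h$, conditioned on $F^{t_3}_h$. The main obstacle I anticipate is the flow case: one must be careful that conditioning on all interior flows simultaneously still factorizes into a product of multinomials, which either requires the routing paths to partition the $(i,j)$ indices without overlap or an appeal to the specific routing structure of the network. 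The row and column cases are immediate consequences of the Poisson-to-multinomial identity together with the independence of rows (resp.\ columns) of $\bm{N}^{t_\tau}$.
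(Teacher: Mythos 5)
Your proposal is correct and follows essentially the same route as the paper's proof: Jensen's inequality applied factor-by-factor to the assumed independent decomposition, collapse of the joint $\text{P}(\text{obs}^{t_\tau},\bm{N}^{t_\tau}|\bm{\Lambda})$ to $\text{P}(\bm{N}^{t_\tau}|\bm{\Lambda})$ because each observation is a deterministic function of $\bm{N}^{t_\tau}$, tightness via the KL-divergence characterization of the Jensen slack, and identification of the conditionals as multinomials via the Poisson-sum conditioning identity. The caveat you raise about whether conditioning on all interior flows simultaneously factorizes is a fair one, but the paper sidesteps it in exactly the way you suggest, by treating each flow observation at its own independent time point $t_3$ so that each is conditioned on separately.
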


In the EM algorithm, the expectation in the E-step is taken with respect to the distribution estimated using the previous iteration's estimate of the parameter $\hat{\bm{\Lambda}}^{k}$, and the M-step does not depend on the entropy terms in the lower bound in Proposition \ref{prop:indep_time_LB}, which are constant with respect to $\bm{\Lambda}$. Since the likelihoods are all Poisson, the E-step reduces to computing the means of multinomial distributions and the M-step for any $ij$ pair is given by the Poisson MLE with the unknown $N_{ij}^t$ terms replaced by their mean values. Explicitly the M-step objective is
\begin{flalign} \label{mle_obj}
& \hat{\Lambda}_{ij}^{k+1} = \underset{\Lambda_{ij}}{\arg\max} \, -\Lambda_{ij} + \log( \Lambda_{ij} ) N_{ij}^{total} 
\end{flalign}
where $ N_{ij}^{total} = \sum_{t_{1}=1}^T \E(N_{ij}^{t_{1}} | \bm{R}^{t_{1}}, \hat{\bm{\Lambda}}^{k}) \\+ \sum_{t_{2}=1}^T \E(N_{ij}^{t_{2}} | \bm{C}^{t_{2}}, \hat{\bm{\Lambda}}^{k}) + \sum_{t_{3}=1}^T \E(N_{ij}^{t_{3}}| \bm{F}^{t_{3}}, \hat{\bm{\Lambda}}^{k})$ and the expectations are with respect to the multinomial distributions of Proposition \ref{prop:indep_time_LB} . Thus the Poisson MLE equals $ \hat{\Lambda}_{ij}^{k+1} = N_{ij}^{total} / 3T $.

\subsubsection{Maximum a Posteriori by EM}

Because there are $P^2$ unobserved variables and only $\mathcal{O}(P)$ observed variables, the expected log likelihoods have a lot of local maxima. In order to make the EM objective better defined and incorporate the baseline Poisson rate information $ \bm{\Lambda}_0 $, a prior can be added to the likelihood model of the previous subsection. The EM objective of this new model is now the expected log posterior and the estimator in the M-step is the maximum a posteriori (MAP) estimator. It is natural to choose a conjugate prior of the form $ \P(\bm{\Lambda} ) = \prod_{ij} \P(\Lambda_{ij} )$ where each $ \Lambda_{ij} \sim Gamma(\epsilon_{ij} \Lambda_{0\, ij} + 1, \epsilon_{ij} ) $ (shape, rate) as this choice yields a closed form expression for the posterior distribution. These priors have modes at the baseline rates $\Lambda_{0 \, ij}$. The hyperparameters $\epsilon_{ij}$ can be thought of as the belief we have in the correctness of the baseline so as $ \epsilon \rightarrow 0 $, the prior variance goes to infinity, and the prior becomes non-informative because we have no confidence in the baseline, while as $ \epsilon \rightarrow \infty $, the prior variance goes to zero, and the prior degenerates into the point $ \Lambda_{0 \, ij}$ because we are certain the baseline is correct. 

Given a matrix of hyperparameters $\bm{\epsilon}$, the complete data posterior distribution is $ \P(\bm{\Lambda} | \bm{\epsilon}, \bm{N}^1, \dots, \bm{N}^T) = \prod_{ij} \P(\Lambda_{ij} | \epsilon_{ij} , N_{ij} ^1, \dots, N_{ij} ^T)$ where each posterior is of the form of $Gamma(\epsilon_{ij} \Lambda_{0 \, ij} + 1 + \sum_{t=1}^T N_{ij}, \epsilon_{ij} + T) $. Because we can only observe the network indirectly $\mathcal{D} = \{\bm{R}^t, \bm{C}^t, \bm{F}^t \}_{t=1}^T$, we again must estimate the mode of this posterior using the EM algorithm, which is very similar to the algorithm for the likelihood model. The only difference is the M-step in which an additional term of the form $ \sum_{ij} (\epsilon_{ij} \Lambda_{0 \, ij}) \log(\Lambda_{ij}) - \epsilon_{ij} \Lambda_{ij} $ is added to \eqref{mle_obj}. Thus at every EM iteration, the entries of the MAP estimator matrix $\hat{\bm{\Lambda}}^{k+1} $ are
\begin{flalign} \label{lambda_map}
\hat{\Lambda}^{k+1}_{ij} = \frac{\epsilon_{ij} \Lambda_{0 \, ij} + N_{ij}^{total} }{\epsilon_{ij} + 3T} 
\end{flalign}
where $ N_{ij}^{total} $ is the same as in \eqref{mle_obj}.

\subsubsection{Bayesian Hierarchical Model}

Choosing the hyperparameters $ \epsilon_{ij} $ can be difficult because it is not always possible to quantify our belief in the correctness of the baseline rates. We can rectify this by allowing the $\epsilon_{ij}$ to be random with hyperpriors $ \epsilon_{ij} \sim Uniform(0, \infty)$. We choose uninformative hyperpriors for $\epsilon_{ij} > 0$. A notional diagram for the proposed hierarchical model is shown in Fig. \ref{fig:gen_model}. 

\begin{figure}[h] 
	\centering
	\includegraphics[width=\linewidth]{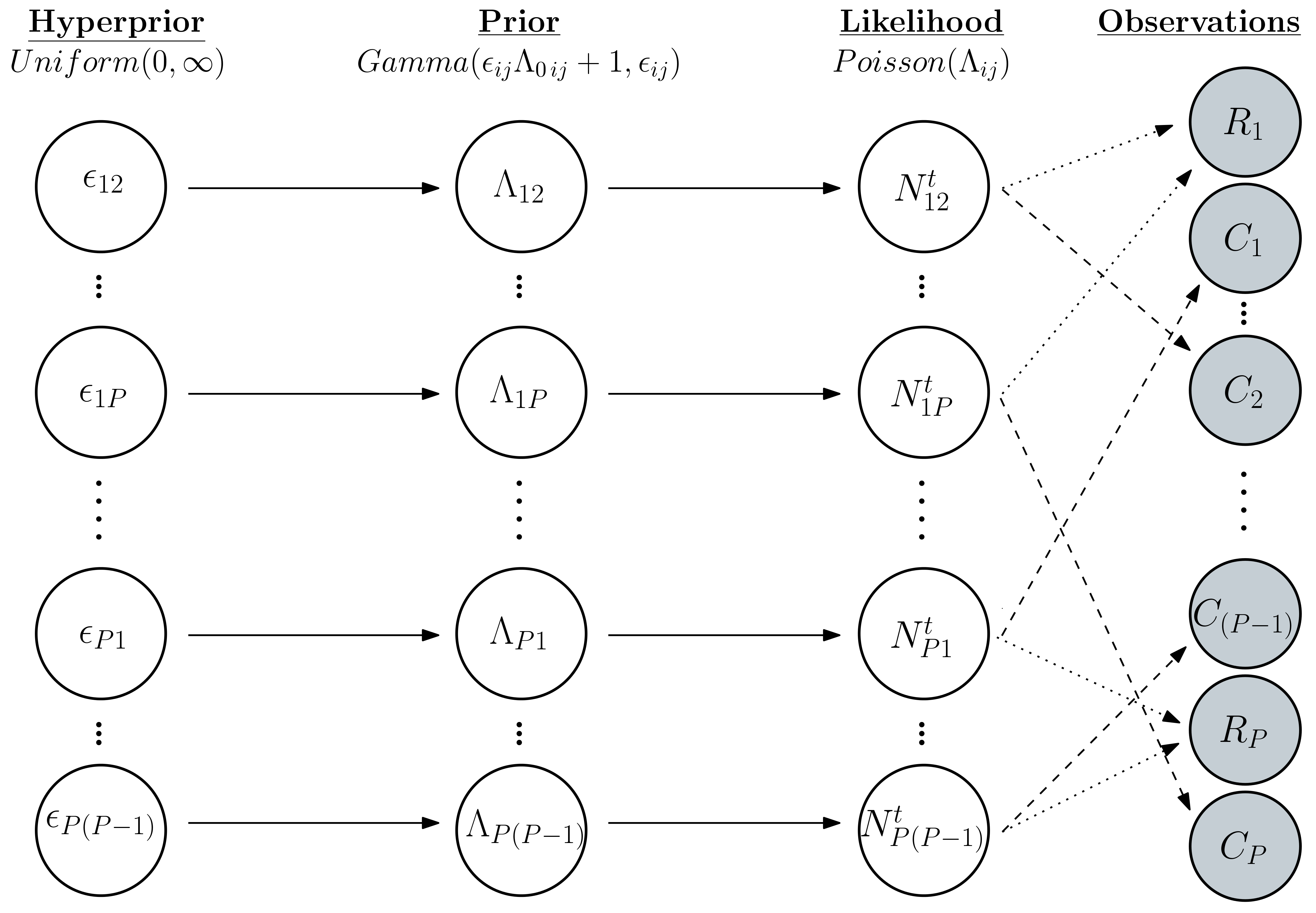}
	\caption{The statistical process believed to underlie our network.}
	\label{fig:gen_model}
\end{figure}

With these uninformative priors the posterior takes the form
\begin{flalign} \label{full_hi_low_bound}
& \P(\bm{\Lambda} | \bm{N}^1, \dots, \bm{N}^T) = \int \frac{\P(\bm{N}^1, \dots, \bm{N}^T | \bm{\Lambda} ) \P(\bm{\Lambda} | \bm{\epsilon}) \P(\bm{\epsilon}) }{ \P(\bm{N}^1, \dots, \bm{N}^T )} \, d\bm{\epsilon} \notag \\
& =\int \frac{\P(\bm{N}^1, \dots, \bm{N}^T | \bm{\Lambda} ) \P(\bm{\Lambda} | \bm{\epsilon}) }{\P(\bm{N}^1, \dots, \bm{N}^T| \epsilon)} \frac{\P(\bm{N}^1, \dots, \bm{N}^T | \bm{\epsilon}) \P(\bm{\epsilon}) }{\P(\bm{N}^1, \dots, \bm{N}^T)} \, d\bm{\epsilon} \notag \\
& = \int \P(\bm{\Lambda} | \bm{\epsilon} , \bm{N}^1, \dots, \bm{N}^T ) \P(\bm{\epsilon} | \bm{N}^1, \dots, \bm{N}^T ) \, d\bm{\epsilon}
\end{flalign}
where $ \P(\bm{\epsilon} | \bm{N}^1, \dots, \bm{N}^T ) = \int \P(\bm{\Lambda}, \bm{\epsilon} | \bm{N}^1, \dots, \bm{N}^T ) \, d\bm{\Lambda}$. The observed (incomplete data) log posterior $ \log\P(\bm{\Lambda} | \mathcal{D}) $ has lower bound proportional to
\begin{flalign*}
& \log \left( \int \exp \bigg\{  \E_{\q} \left(\log \P(\bm{\Lambda} | \bm{\epsilon} , \bm{N}^1, \dots, \bm{N}^T) \right) \hspace{-4pt} \bigg\} \right. \\
& \hspace{33pt} \left. \exp \left\{  \E_{\q} \left( \log \int \P(\bm{\Lambda}, \bm{\epsilon} | \bm{N}^1, \dots, \bm{N}^T) \, d\bm{\Lambda} \right) \right\} d\bm{\epsilon} \right) 
\end{flalign*}
which is tight when $ \q = \P(\bm{N}^1, \dots, \bm{N}^T | \mathcal{D}, \bm{\Lambda})$, as shown in \eqref{lower_bound} in the Appendix.

However, marginalizing the joint posterior $ \int \P(\bm{\Lambda}, \bm{\epsilon} | \bm{N}^1, \dots, \bm{N}^T) \, d\bm{\Lambda} $ is often not feasible, so instead it is popular to use empirical Bayes to approximate it with a point-estimate 

We propose an empirical Bayes approach to maximizing the log posterior as an alternative to maximization of \eqref{full_hi_low_bound} $ \hat{\bm{\epsilon}} = \underset{\bm{\epsilon}}{\arg\max} \, \P(\bm{\epsilon} | \bm{N}^1, \dots, \bm{N}^T ) $. This empirical Bayes approximation can be embedded in the EM algorithm so that once we have an estimate for $\bm{\epsilon}$, an estimator for $ \bm{\Lambda} $ is obtained by maximizing the expected log conditional posterior $  \E_{\q} \left(\log \P(\bm{\Lambda} | \hat{\bm{\epsilon}} , \bm{N}^1, \dots, \bm{N}^T ) \right) $.

\begin{theorem} \label{thm:hbayes}
Using the time independence in Proposition \ref{prop:indep_time_LB} and the empirical Bayes approximation, the E-step of the EM algorithm for the hierarchal model is
\begin{flalign*}
& \hat{N}_{ij}^{t_{1}} = \emph{E}(N_{ij}^{t_{1}} | \bm{R}^{t_{1}}, \hat{\bm{\Lambda}}^{k}) = \frac{\hat{\Lambda}_{ij}^{k}}{\sum_{j=1}^P \hat{\Lambda}_{ij}^{k} } R^{t_{1}}_{i} \\
& \hat{N}_{ij}^{t_{2}} = \emph{E}(N_{ij}^{t_{2}}| \bm{C}^{t_{2}}, \hat{\bm{\Lambda}}^{k}) = \frac{\hat{\Lambda}_{ij}^{k}}{\sum_{i=1}^P \hat{\Lambda}_{ij}^{k} } C^{t_{2}}_{j}, \\
& \hat{N}_{ij}^{t_{3}} = \emph{E}(N_{ij}^{t_{3}} | \bm{F}^{t_{3}}, \hat{\bm{\Lambda}}^{k}) = \frac{\hat{\Lambda}_{ij}^{k}}{\sum_{ij} \hat{\Lambda}_{ij}^{k} } F^{t_{3}}_{h} \text{ for any pair $ij$,}\end{flalign*}
and the M-step is
\begin{flalign*}
& \hat{\epsilon}^{\, k+1}_{ij} = \underset{\epsilon_{ij}}{\arg\max} \sum_{\tau=1}^3 \sum_{t_{\tau}=1}^T \log \frac{\Gamma(\hat{N}_{ij}^{t_{\tau}} + \epsilon_{ij} \Lambda_{0 \, ij} + 1)}{\Gamma(\epsilon_{ij}\Lambda_{0 \, ij} + 1) } \\
& \hspace{12pt} + \hspace{-1pt} \sum_{\tau=1}^3 \sum_{t_{\tau}=1}^T \hspace{-1pt} (\epsilon_{ij} \Lambda_{0 \, ij} + 1) \log \frac{\epsilon_{ij}}{1+\epsilon_{ij}} - \hat{N}_{ij}^{t_{\tau}} \hspace{-1pt} \log(1+\epsilon_{ij}) \\
& \text{ and } \\
& \hat{\Lambda}^{k+1}_{ij} = \underset{\Lambda_{ij}}{\arg\max} \, (\hat{\epsilon}_{ij} \Lambda_{0\, ij} ) \log(\Lambda_{ij}) - \hat{\epsilon}_{ij} \Lambda_{ij} -3T \Lambda_{ij} \\
& \hspace{24pt} + \log(\Lambda_{ij}) \left( \sum_{t_{1}=1}^T \hat{N}_{ij}^{t_{1}} + \sum_{t_{2}=1}^T \hat{N}_{ij}^{t_{2}} + \sum_{t_{3}=1}^T \hat{N}_{ij}^{t_{3}} \right) .
\end{flalign*}

\end{theorem}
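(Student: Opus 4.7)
The plan is to derive the E-step and the two parts of the M-step (empirical-Bayes maximization for $\bm{\epsilon}$, then conditional MAP maximization for $\bm{\Lambda}$) separately, relying on the time-independence decomposition of Proposition~\ref{prop:indep_time_LB} together with the standard Gamma–Poisson conjugacy. Throughout I would treat the three observation types (row sums, column sums, interior-node flows) as three independent streams, as licensed by Proposition~\ref{prop:indep_time_LB}, so that the variational distribution $\q$ factors accordingly.

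For the E-step I would invoke the classical fact that if $X_1,\dots,X_P$ are independent $\mathrm{Poisson}(\lambda_j)$ random variables, then conditional on $\sum_j X_j=S$ the joint law is $\mathrm{Multinomial}\bigl(S,\{\lambda_j/\sum_{j'}\lambda_{j'}\}\bigr)$. Applying this three times —  along each row of $\bm{N}^{t_1}$ to obtain $\P(N_{ij}^{t_1}\mid R_i^{t_1},\bm{\Lambda})$, along each column of $\bm{N}^{t_2}$ to obtain $\P(N_{ij}^{t_2}\mid C_j^{t_2},\bm{\Lambda})$, and along the set of SD pairs whose route passes through interior node $h$ to obtain $\P(N_{ij}^{t_3}\mid F_h^{t_3},\bm{\Lambda})$ — yields the three multinomial distributions. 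The stated expectations then follow immediately from the multinomial mean formula evaluated at $\hat{\bm{\Lambda}}^k$.

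For the $\bm{\Lambda}$-portion of the M-step I would use the conjugacy observation that the complete-data posterior factorizes over $ij$ with $\P(\Lambda_{ij}\mid \epsilon_{ij},N_{ij}^1,\dots,N_{ij}^T)\sim\mathrm{Gamma}(\epsilon_{ij}\Lambda_{0\,ij}+1+\sum_t N_{ij}^t,\;\epsilon_{ij}+T)$, take the logarithm to expose the affine dependence on the latent counts, and then push the expectation $\E_{\q}$ through. Under Proposition~\ref{prop:indep_time_LB} this replaces $\sum_t N_{ij}^t$ with $\sum_{\tau=1}^3\sum_{t_\tau=1}^T\hat{N}_{ij}^{t_\tau}$ and upgrades the linear penalty $T\Lambda_{ij}$ to $3T\Lambda_{ij}$, reproducing the claimed objective. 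For the $\bm{\epsilon}$-portion I would first marginalize $\Lambda_{ij}$ out of $\P(N_{ij}\mid\Lambda_{ij})\,\P(\Lambda_{ij}\mid\epsilon_{ij})$; the resulting integral is a standard Gamma integral and produces a negative-binomial pmf with size $\epsilon_{ij}\Lambda_{0\,ij}+1$ and success parameter $\epsilon_{ij}/(1+\epsilon_{ij})$. Taking its logarithm gives exactly the three terms $\log[\Gamma(N+\epsilon\Lambda_0+1)/\Gamma(\epsilon\Lambda_0+1)]$, $(\epsilon\Lambda_0+1)\log[\epsilon/(1+\epsilon)]$ and $-N\log(1+\epsilon)$ in the theorem, once the uniform hyperprior on $\epsilon_{ij}$ and the $N$-only constants are dropped.

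I expect the main obstacle to be the $\bm{\epsilon}$-update: the $\log\Gamma$ term in the negative-binomial log-likelihood is not affine in the latent count $N_{ij}^{t_\tau}$, so pushing $\E_{\q}$ through it is not exact. I would justify the plug-in substitution $\E_{\q}[\log\Gamma(N+c)]\approx \log\Gamma(\hat{N}+c)$ as the empirical-Bayes approximation announced in the paragraph preceding the theorem, then carefully verify that summing the negative-binomial log-pmf over the three independent time streams from Proposition~\ref{prop:indep_time_LB} with $N$ replaced by the conditional mean $\hat{N}_{ij}^{t_\tau}$ reproduces the stated objective term-by-term. The remaining bookkeeping — checking that the factor of $3T$ in the $\bm{\Lambda}$-step matches the $\tau=1,2,3$ summation, and that no constants in $\bm{\Lambda}$ or $\bm{\epsilon}$ have been dropped — is routine.
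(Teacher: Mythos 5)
Your proposal is correct and follows essentially the same route as the paper: multinomial conditional means for the E-step, Gamma--Poisson conjugacy for the $\bm{\Lambda}$ update, and the Gamma-integral marginalization to a negative binomial with size $\epsilon_{ij}\Lambda_{0\,ij}+1$ and success probability $\epsilon_{ij}/(1+\epsilon_{ij})$ for the $\bm{\epsilon}$ update. The one point where your justification diverges is the non-affine $\log\Gamma$ term: the paper does not fold the plug-in $\E_{\q}[\log\Gamma(N+c)]\mapsto\log\Gamma(\hat{N}+c)$ into the empirical-Bayes approximation (which there refers only to replacing the marginalization over $\bm{\epsilon}$ by the point estimate $\hat{\bm{\epsilon}}$), but instead invokes convexity of $\log\Gamma$ so that Jensen's inequality makes the plug-in expression a further lower bound of the expected complete-data objective, which is then maximized; adopting that justification would make your argument match the paper's exactly.
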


Since the function that lower bounds the observed log likelihood changes after every iteration of the EM algorithm, the prior should also change after every iteration. Intuitively, the earlier iterations of the EM algorithm will have expected log likelihoods that are more misspecified than the later iterations. This suggests spreading the prior distribution in the earlier iterations. The empirical Bayes approximation of Theorem \ref{thm:hbayes} effectively does this by allowing the variance of the prior to be chosen using the data instead of fixing it as a constant. In this manner, the empirical Bayes approximation can be thought of as a Bayesian analog to the regularized EM algorithm of \cite{yi2015regularized}.

\subsection{Warm Starting with Minimum Relative Entropy}

The EM algorithm is well known to be sensitive to initialization, especially if the objective has a lot of local maxima. Thus if instead of a random initialization, the EM algorithm is warm-started, it is more likely to converge to a good maximum and also potentially converge faster. A good choice for an initialization point is a more robust estimator of the rate matrix such as the solution to a model with fewer distributional assumptions. Thus instead of modeling an explicit generative model, we can instead adopt the minimum relative entropy (MRE) principle \cite{kullback1997information, Cover:2006, altun2006unifying}, and \cite{koyejo2013representation}. Geometrically, this reduces to an information projection of the prior distribution, as shown in Fig. \ref{fig:MRE}. 

\begin{figure}[h] 
	\centering 
	\includegraphics[width=\linewidth]{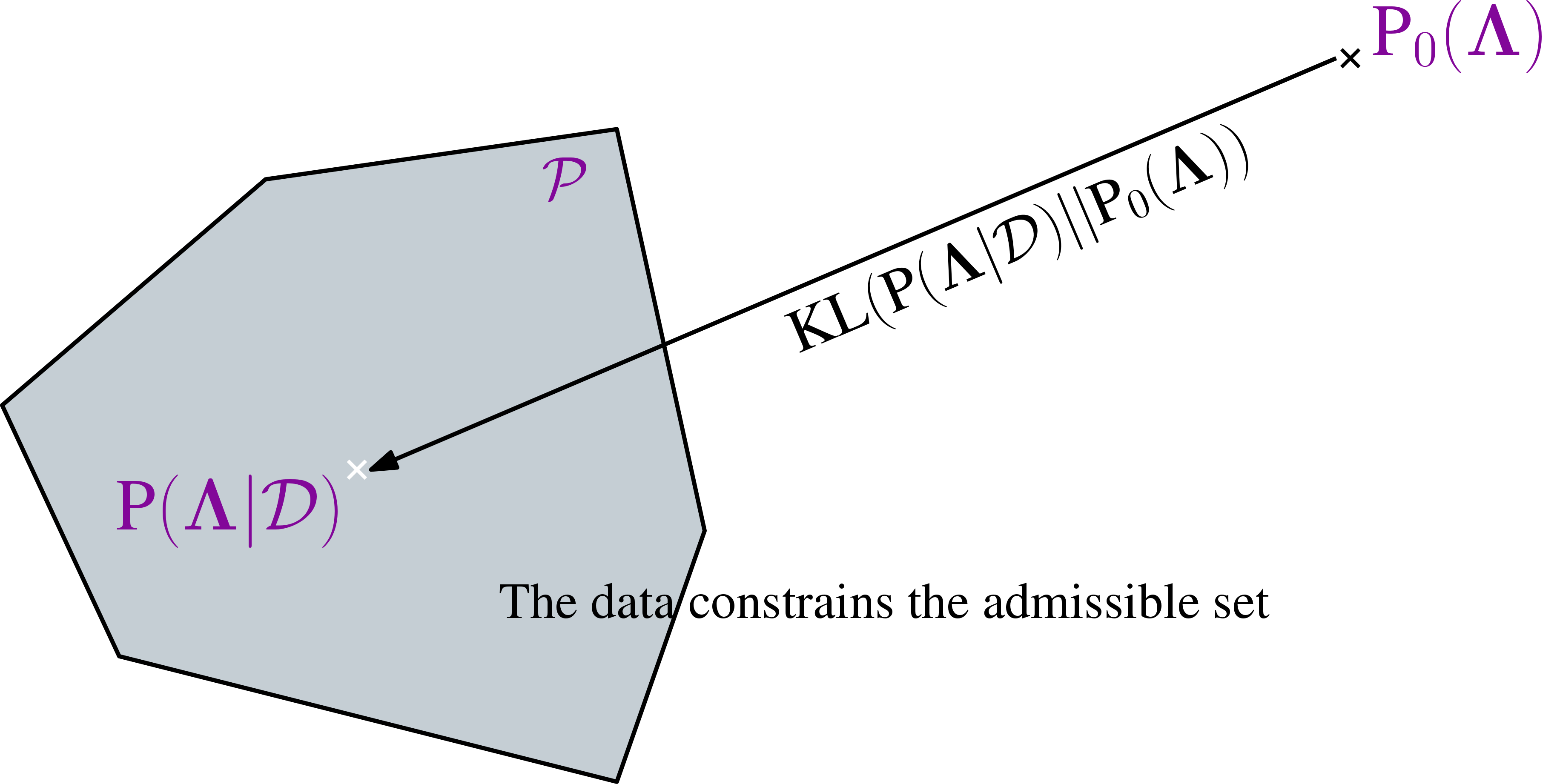}
	\caption{A projection of the prior, $\P_0(\bm{\Lambda})$, onto a feasible set $\mathcal{P}$ of distributions that satisfy the observed data, $\mathcal{D}$.}
	\label{fig:MRE}
\end{figure}

The constrained minimum relative entropy distribution is the density that is closest to a given prior distribution and lies in a feasible set, $\mathcal{P}$. This feasible set is formed from constraints that require their expected values, with respect to the minimum relative entropy distribution, to match properties of the observations, $\mathcal{D}$ (the total ingress, egress, and flows). And because relative entropy is the Kullback-Leibler (KL) divergence between probability distributions, this is used as the metric for closeness. This closeness criterion is well suited to the anomaly detection problem of interest to us because anomalous activity is rare, so the distribution of the actual rates, $\bm{\Lambda}$, should be similar to the prior distribution $\P_0(\bm{\Lambda}) = \P(\bm{\Lambda} | \bm{\Lambda}_0 )$, which is parameterized by the baselines rates $\bm{\Lambda_0}$.

The MRE objective is
\begin{flalign*}
& \underset{\P(\bm{\Lambda} | \bm{R}, \bm{C}, \bm{F}) }{\min} \text{KL} \left(\P(\bm{\Lambda} | \bm{R}, \bm{C}, \bm{F}) || \P_0(\bm{\Lambda} ) \right) \\
& \mathrel{\makebox[.6\linewidth]{\text{subject to} }} \\
& \int \P(\bm{\Lambda} | \bm{R}, \bm{C}, \bm{F}) (\bm{\Lambda} \bm{1} - \bar{\bm{R}}) \, d\bm{\Lambda} = \bm{0} \\
& \int \P(\bm{\Lambda} | \bm{R}, \bm{C}, \bm{F}) ( \bm{1}'\bm{\Lambda} - \bar{\bm{C}} ) \, d\bm{\Lambda} = \bm{0} \\
& \int \P(\bm{\Lambda} | \bm{R}, \bm{C}, \bm{F}) ( \bm{A} \bm{\Lambda} \bm{B} - \bar{\bm{F}}) \, d\bm{\Lambda} = \bm{0} \\
\end{flalign*}
where $\bm{0}$ and $\bm{1}$ are vectors of zeros and ones respectively, $ \bar{\bm{C}} = \frac{1}{T} \sum_{t=1}^T \bm{C}^t $ and $ \bar{\bm{R}} = \frac{1}{T} \sum_{t=1}^T \bm{R}^t $ are the average rates of observed total traffic into and out of each node, and $ \bm{A} $ and $ \bm{B} $ are 0-1 matrices summing the rates that flow through each of the interior nodes with average observations $ \bar{\bm{F}} = \frac{1}{T} \sum_{t=1}^T \bm{F}^t $. Using the Legendre transform of the Lagrangian to get the Hamiltonian, the optimal density has the form
\begin{flalign} \label{re-sol}
& \P(\bm{\Lambda} | \bm{R}, \bm{C}, \bm{F}) = \frac{\P_0(\bm{\Lambda} ) }{Z(\bm{\rho}, \bm{\gamma}, \bm{\phi})} \exp \left\{ \bm{\rho}' (\bm{\Lambda} \bm{1} - \bar{\bm{R}}) \right. \\
& \hspace{90pt} \left. + \bm{\gamma}' (\bm{1}' \bm{\Lambda} -\bar{\bm{C}} ) + \bm{\phi}' (\bm{A} \bm{\Lambda} \bm{B} - \bar{\bm{F}}) \right\} \notag 
\end{flalign}
where $ \bm{\rho}, \bm{\gamma}, \bm{\phi} $ are Lagrange multipliers that maximize the negative log partition function $-\log(Z\left(\bm{\rho}, \bm{\gamma}, \bm{\phi}) \right)$.

\begin{proposition} \label{thm:mre}
Let $ \emph{\P}_0(\bm{\Lambda} ) = \prod_{i j} \emph{\P}_0(\Lambda_{ij}) $ be independent Laplace distributions with mean parameter $ \Lambda_{0 \, ij} $ and scale parameter $1$, then the constrained mode of the MRE distribution is the solution to
\begin{flalign*}
& \underset{\bm{\Lambda} \in \mathbb{R}^{+}}{\arg\max} \, -||\bm{\Lambda} - \bm{\Lambda}_0||_1 + \hat{\bm{\rho}}' (\bm{\Lambda} \bm{1} - \bar{\bm{R}}) \\
& \hspace{97pt} + \hat{\bm{\gamma}}'(\bm{1}' \bm{\Lambda} - \bar{\bm{C}})' + \hat{\bm{\phi}}' (\bm{A} \bm{\Lambda} \bm{B} - \bar{\bm{F}}) 
\end{flalign*}
where $\hat{\bm{\rho}}, \hat{\bm{\gamma}}, \hat{\bm{\phi}} = \underset{\bm{\rho}, \bm{\gamma}, \bm{\phi}}{\arg\max} -\log \left(Z(\bm{\rho}, \bm{\gamma}, \bm{\phi}) \right) $. 
\end{proposition}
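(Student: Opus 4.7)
The plan is to substitute the specified Laplace prior directly into the MRE optimal density \eqref{re-sol}, take a logarithm, and identify the mode. Writing the independent Laplace prior explicitly as $\P_0(\bm{\Lambda}) = \prod_{ij} \tfrac{1}{2}\exp(-|\Lambda_{ij} - \Lambda_{0\,ij}|)$, I have $\log \P_0(\bm{\Lambda}) = -\|\bm{\Lambda}-\bm{\Lambda}_0\|_1 + c_0$, where $c_0$ does not depend on $\bm{\Lambda}$. Taking the logarithm of \eqref{re-sol} then yields
\begin{flalign*}
\log \P(\bm{\Lambda}|\bm{R},\bm{C},\bm{F}) =\; & -\|\bm{\Lambda}-\bm{\Lambda}_0\|_1 + \bm{\rho}'(\bm{\Lambda}\bm{1} - \bar{\bm{R}}) \\
& + \bm{\gamma}'(\bm{1}'\bm{\Lambda} - \bar{\bm{C}}) + \bm{\phi}'(\bm{A}\bm{\Lambda}\bm{B} - \bar{\bm{F}}) \\
& - \log Z(\bm{\rho},\bm{\gamma},\bm{\phi}) + c_0.
\end{flalign*}

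Next I would compute the mode by maximizing this expression over the feasible rate domain $\bm{\Lambda} \in \mathbb{R}^{+}$. Because the partition function $\log Z(\bm{\rho},\bm{\gamma},\bm{\phi})$ and the constant $c_0$ carry no dependence on $\bm{\Lambda}$, they can be dropped from the argmax. The surviving $\bm{\Lambda}$-dependent terms coincide exactly with the objective stated in the proposition, once the multipliers are set to their optimal values $\hat{\bm{\rho}}, \hat{\bm{\gamma}}, \hat{\bm{\phi}}$ (constant additive terms such as $-\hat{\bm{\rho}}'\bar{\bm{R}}$ may either be retained or discarded without affecting the maximizer).

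Finally, I would justify the choice of $\hat{\bm{\rho}}, \hat{\bm{\gamma}}, \hat{\bm{\phi}}$ as the maximizers of $-\log Z(\bm{\rho},\bm{\gamma},\bm{\phi})$ by invoking the standard Legendre/Fenchel duality already outlined immediately after \eqref{re-sol}: the MRE primal in $\P(\bm{\Lambda}|\bm{R},\bm{C},\bm{F})$ has a dual whose stationarity conditions are precisely the moment constraints, and its dual objective is the negative log partition function. I do not expect any genuine analytical obstacle here; the main delicacy is merely bookkeeping of the positivity restriction $\bm{\Lambda}\in\mathbb{R}^{+}$, which the Laplace prior's support on $\mathbb{R}$ does not impose automatically. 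This restriction is carried inside the constraint set of the argmax rather than modifying the form of the Lagrangian, so the stated expression is reached by substitution and identification of the $\bm{\Lambda}$-varying terms in the log density.
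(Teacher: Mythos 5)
Your proposal is correct and follows essentially the same route as the paper's proof: substitute the Laplace prior into the exponential-family form \eqref{re-sol}, take logarithms, and drop the $\bm{\Lambda}$-independent terms ($\log Z$ and the normalizing constant) from the argmax over $\bm{\Lambda}\in\mathbb{R}^{+}$. The only difference is that the paper's proof goes on to compute $-\log Z(\bm{\rho},\bm{\gamma},\bm{\phi})$ explicitly and to derive the Fenchel dual of \eqref{l1-opt}, showing the two duals agree up to log-barrier terms; that extra material supports the remark following the proposition about \eqref{l1-opt} being a tightening of the MRE mode, but is not needed for the proposition as literally stated.
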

Maximizing the above expression over $\bm{\Lambda}$ (constrained to only positive real numbers) can be seen as a slight relaxation of the more direct objective of minimizing the loss function
\begin{flalign} \label{l1-opt}
& \underset{\bm{\Lambda} \in \mathbb{R}^{+}}{\arg\min} \, || \bm{\Lambda} - \bm{\Lambda}_0 ||_1 \\
& \text{subject to} \,\,\bm{\Lambda} \bm{1} = \bar{\bm{R}}, \,\, \bm{1}' \bm{\Lambda} = \bar{\bm{C}}, \,\, \bm{A} \bm{\Lambda} \bm{B} = \bar{\bm{F}} \notag
\end{flalign}
where $|| \cdot ||_1$ is the element wise $\ell_1$ norm. The loss function in \eqref{l1-opt} has the advantage that it can be easily implemented in any constrained convex solver such as CVX \cite{cvx}.

The objective in \eqref{l1-opt} is an easily interpretable formulation for estimating the rate matrix, which does not depend on the unobserved traffic $N^t_{ij}$. And, because it does not put distributional assumptions on the ``likelihood", it is more robust to model mismatch, at the cost of accuracy. The generality of the solution to \eqref{l1-opt}, while not precise enough on its own, makes it a good candidate to be further refined by the EM algorithm in the Hierarchical Poisson model.

\section{Testing For Anomalies} \label{testing}

Since the estimators in the previous section are maximizers of probabilistic models, a natural way to test for anomalies in the rate matrix $\bm{\Lambda}$ is to compare goodness of fit of the fitted model using hypothesis testing. By testing the null hypothesis $ \v(\bm{\Lambda}) = \v(\bm{\Lambda}_{0} ) $ against the alternative hypothesis $ \v(\bm{\Lambda}) \neq \v(\bm{\Lambda}_{0} ) $, we can control the false positive rate (FPR) (Type 1 error), of incorrectly declaring anomalous activity in the rate matrix, using a level-$\alpha$ test. In this section we will represent a statistical model with the notation $\mathcal{M}(\cdot)$, as the results apply for both log likelihood and log posterior models.

Depending on if the statistical models are likelihoods or posteriors, the statistic 
\begin{flalign} \label{correct_stat}
\psi = -2 \sum_{t=1}^T \left( \log(\mathcal{M}_t(\bm{\Lambda}_0)) - \log(\mathcal{M}_t(\hat{\bm{\Lambda}})) \right)
\end{flalign}
would be either a log likelihood ratio (LR) statistic or a log posterior density ratio (PDR) statistic \cite{basu1996bayesian} respectively, where $\hat{\bm{\Lambda}} = \underset{\bm{\Lambda} \in \mathbb{R}^+ }{\arg\max} \, \mathcal{M}(\bm{\Lambda}) $. Thus testing $\psi$ against a threshold can be seen as a generalized log likelihood ratio test or generalized log posterior ratio test with a composite alternative hypothesis. 

\begin{proposition} \label{prop:pdr}
Under the standard regularity conditions for the log LR statistic or under the sufficient conditions of the Bernstein-von Mises theorem for the log PDR statistic, $\psi$ will be asymptotically $\chi^2_{P^2 - P}$ distributed under the null hypothesis.
\end{proposition}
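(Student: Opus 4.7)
The plan is to treat the two cases in parallel. For the log LR case, I would invoke Wilks' theorem directly. The parameter space of interest is the set of off-diagonal entries of $\bm{\Lambda}$ (the diagonal is identically zero since a node does not transmit to itself), so the unrestricted model has dimension $P^2-P$, while the null hypothesis $\mathrm{vec}(\bm{\Lambda})=\mathrm{vec}(\bm{\Lambda}_0)$ fixes all of these parameters, giving $P^2-P$ constraints. Under the standard regularity conditions (identifiability, interior point in the parameter space, smoothness of the log-likelihood in $\bm{\Lambda}$, positive definite Fisher information, and differentiability in quadratic mean), the classical argument via a second-order Taylor expansion of $\log \mathcal{M}_t(\bm{\Lambda})$ around the MLE $\hat{\bm\Lambda}$ yields
\begin{equation*}
\psi \;=\; T\,(\hat{\bm\Lambda}-\bm\Lambda_0)^\top I(\bm\Lambda_0)(\hat{\bm\Lambda}-\bm\Lambda_0) + o_P(1),
\end{equation*}
where $I(\bm\Lambda_0)$ is the Fisher information per observation. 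Combined with the asymptotic normality $\sqrt{T}(\hat{\bm\Lambda}-\bm\Lambda_0)\Rightarrow\mathcal{N}(0,I(\bm\Lambda_0)^{-1})$ under $H_0$, the quadratic form has the $\chi^2_{P^2-P}$ distribution.

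For the log PDR case, the strategy is to reduce it to the LR case by means of the Bernstein-von Mises (BvM) theorem. BvM says that under suitable conditions (prior positive and continuous at $\bm\Lambda_0$, regularity of the likelihood, consistent testability, and identifiability), the posterior distribution of $\sqrt{T}(\bm\Lambda-\hat{\bm\Lambda})$ given the data converges in total variation to $\mathcal{N}(0,I(\bm\Lambda_0)^{-1})$. Consequently the log posterior density differs from the log likelihood by a term that is $O_P(1)$ times a smooth prior factor, so the difference $\log \mathcal{M}_t(\bm\Lambda_0)-\log \mathcal{M}_t(\hat{\bm\Lambda})$ under the posterior coincides with the log likelihood ratio up to an additive contribution that vanishes after summing and scaling, since $\sum_{t=1}^T[\log \pi(\hat{\bm\Lambda}) - \log \pi(\bm\Lambda_0)] = O_P(1)$ while the quadratic part scales with $T$. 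A Taylor expansion of the smooth log-prior around $\bm\Lambda_0$ then shows $\psi$ agrees with the LR version to leading order, so the same $\chi^2_{P^2-P}$ limit applies.

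The main obstacle I expect is the second half: formalizing that the PDR statistic inherits the $\chi^2$ limit from the LR statistic in our specific hierarchical model. Several subtleties arise that I would need to address explicitly: (i) the empirical-Bayes approximation means the ``posterior'' being maximized is $\P(\bm\Lambda\mid\hat{\bm\epsilon},\mathcal D)$ with a data-dependent hyperparameter, so I must verify that BvM still applies in this plug-in form (standard if $\hat{\bm\epsilon}$ is consistent and the prior varies smoothly in $\bm\epsilon$); (ii) the constraint $\bm\Lambda\in\mathbb{R}^+$ could put $\bm\Lambda_0$ on the boundary for some entries, in which case the limiting distribution would be a mixture of $\chi^2$'s rather than a pure $\chi^2_{P^2-P}$ — the statement implicitly assumes $\bm\Lambda_0$ lies in the interior; and (iii) the effective likelihood is the decoupled surrogate of Proposition~\ref{prop:indep_time_LB}, not the true joint, so the ``Fisher information'' driving the asymptotics is really the information associated with the surrogate model. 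I would state this as an assumption on the regularity of the decoupled model and then the rest follows from standard arguments.
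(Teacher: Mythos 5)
Your proposal is correct and follows essentially the same route as the paper: Wilks' theorem gives the $\chi^2_{P^2-P}$ limit for the log LR statistic, and the log PDR case is reduced to it by arguing that the prior's contribution is asymptotically negligible (the paper does this by citing Watanabe's Remark~1.7 that the MAP estimator shares the MLE's asymptotics for regular models, which is the same reduction you obtain via Bernstein--von Mises). Your additional caveats about the empirical-Bayes plug-in, the boundary of $\mathbb{R}^+$, and the decoupled surrogate likelihood are legitimate points the paper's two-sentence proof does not address, but they do not change the argument's structure.
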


Next we show that the statistic $\psi$ in \eqref{correct_stat} is a good estimator of the KL divergence between the true model at its maximum and the true model at the baseline. And even if the models are misspecified, the statistic
\begin{flalign*}
\hat{\psi} = -2 \sum_{t=1}^T \left( \log( \hat{\mathcal{M}}_t^k(\bm{\Lambda}_0) ) - \log(\hat{\mathcal{M}}_t^k(\hat{\bm{\Lambda}})) \right) 
\end{flalign*}
can still be a good estimator for goodness-of-fit, where the $k$ in $\hat{\mathcal{M}}^k(\bm{\Lambda}_0)$ and $ \hat{\mathcal{M}}^k(\hat{\bm{\Lambda}})$ indicates the iteration of the EM algorithm. 

\begin{proposition} \label{prop:miss}
The statistic $\psi/T$ is a consistent estimator for 
\begin{flalign*}
\Psi = 2 \, \emph{KL}\left(\mathcal{M}(\bm{\Lambda}^*)|| \mathcal{M}(\bm{\Lambda}_0) \right),
\end{flalign*}
 the KL divergence between the true model and the true model under the null hypothesis. The statistic $\hat{\psi}/T$ is a consistent estimator for 
\begin{flalign} \label{miss_err}
& 2 \, \emph{KL}\left(\mathcal{M}(\bm{\Lambda}^*)|| \mathcal{M}(\bm{\Lambda}_0) \right) \\
& \hspace{15pt} -2 \left( \emph{KL} ( \mathcal{M}(\bm{\Lambda}^*) || \hat{\mathcal{M}}^k(\hat{\bm{\Lambda}}^*) ) - \emph{KL} ( \mathcal{M}(\bm{\Lambda}_0) || \hat{\mathcal{M}}^k(\bm{\Lambda}_0) ) \hspace{-1pt} \right) \notag
\end{flalign}
where $ \hat{\mathcal{M}}^k(\hat{\bm{\Lambda}}^*) $ is the closest population local maximum at iteration $k$. 
\end{proposition}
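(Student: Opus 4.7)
The plan is to reduce both claims to a routine application of the law of large numbers for i.i.d.\ observations $\{\mathbf{R}^t,\mathbf{C}^t,\mathbf{F}^t\}_{t=1}^T$ drawn from the true model $\mathcal{M}(\bm{\Lambda}^*)$, and then an add-and-subtract trick to assemble the claimed KL-divergence decomposition. Writing
\begin{flalign*}
\psi/T = -\frac{2}{T}\sum_{t=1}^T \log \mathcal{M}_t(\bm{\Lambda}_0) + \frac{2}{T}\sum_{t=1}^T \log \mathcal{M}_t(\hat{\bm{\Lambda}}),
\end{flalign*}
I would first invoke the LLN on the first sum to get convergence in probability to $\E_{\mathcal{M}(\bm{\Lambda}^*)}[\log \mathcal{M}(\bm{\Lambda}_0)]$. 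For the second sum, I would combine the LLN with consistency of the (well-specified) MLE/MAP $\hat{\bm{\Lambda}}\to\bm{\Lambda}^*$ (as in Proposition \ref{prop:pdr}, which already uses the standard regularity conditions) and a uniform-convergence argument on a neighborhood of $\bm{\Lambda}^*$ to pass from $\hat{\bm{\Lambda}}$ to $\bm{\Lambda}^*$ inside the expectation. Recognising $\E_{\mathcal{M}(\bm{\Lambda}^*)}[\log \mathcal{M}(\bm{\Lambda}^*) - \log \mathcal{M}(\bm{\Lambda}_0)]$ as $\text{KL}(\mathcal{M}(\bm{\Lambda}^*)\,\|\,\mathcal{M}(\bm{\Lambda}_0))$ closes the first claim.

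For the misspecified statistic $\hat{\psi}/T$, I would run the same LLN argument, but now with $\hat{\mathcal{M}}^k$ in place of $\mathcal{M}$. The EM iterate $\hat{\bm{\Lambda}}$ that maximises the sample surrogate $\sum_t \log \hat{\mathcal{M}}^k_t(\cdot)$ is an M-estimator under the misspecified model, so the standard misspecified M-estimation theory (a Wald-style compactness / equicontinuity argument plus uniqueness of the closest local maximum) gives $\hat{\bm{\Lambda}}\to \hat{\bm{\Lambda}}^*$, where $\hat{\bm{\Lambda}}^*$ is the population maximiser of $\E_{\mathcal{M}(\bm{\Lambda}^*)}[\log \hat{\mathcal{M}}^k(\cdot)]$. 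This yields
\begin{flalign*}
\hat{\psi}/T \to -2\,\E_{\mathcal{M}(\bm{\Lambda}^*)}\!\left[\log \hat{\mathcal{M}}^k(\bm{\Lambda}_0)\right] + 2\,\E_{\mathcal{M}(\bm{\Lambda}^*)}\!\left[\log \hat{\mathcal{M}}^k(\hat{\bm{\Lambda}}^*)\right].
\end{flalign*}
I would then add and subtract $2\,\E_{\mathcal{M}(\bm{\Lambda}^*)}[\log \mathcal{M}(\bm{\Lambda}^*)]$ and $2\,\E_{\mathcal{M}(\bm{\Lambda}^*)}[\log \mathcal{M}(\bm{\Lambda}_0)]$ and regroup into three differences: one that reassembles $2\,\text{KL}(\mathcal{M}(\bm{\Lambda}^*)\|\mathcal{M}(\bm{\Lambda}_0))$, and two correction terms of the form $\E_{\mathcal{M}(\bm{\Lambda}^*)}[\log(\mathcal{M}/\hat{\mathcal{M}}^k)]$ that the paper is labelling as $\text{KL}(\mathcal{M}(\bm{\Lambda}^*)\|\hat{\mathcal{M}}^k(\hat{\bm{\Lambda}}^*))$ and $\text{KL}(\mathcal{M}(\bm{\Lambda}_0)\|\hat{\mathcal{M}}^k(\bm{\Lambda}_0))$ — this interpretation works cleanly because the data law $\mathcal{M}(\bm{\Lambda}^*)$ is the reference measure for the LLN, and under the null $\bm{\Lambda}^* = \bm{\Lambda}_0$ the two reference measures coincide.

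The main obstacle, I expect, is the second half: justifying that the EM-produced estimator $\hat{\bm{\Lambda}}$ actually converges to the population closest local maximum $\hat{\bm{\Lambda}}^*$ rather than to some other stationary point. Unlike the well-specified case, here the surrogate being maximised varies with iteration index $k$ and the likelihood need not have a unique maximiser, so the argument must fix $k$, restrict to the basin of attraction of $\hat{\bm{\Lambda}}^*$, and invoke uniform convergence of the sample M-step objective to its population counterpart on a compact neighbourhood; standard continuity of the Poisson/Gamma log-densities in $\bm{\Lambda}$ and the compactness of the simplex constraints in the E-step multinomials should deliver this, but it is the technical crux of the argument.
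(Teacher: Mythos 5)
Your proposal matches the paper's proof in essentially every respect: the paper likewise adds and subtracts $\sum_t \log\mathcal{M}_t(\bm{\Lambda}^*)$ (and, for $\hat{\psi}$, the corresponding terms at $\bm{\Lambda}_0$ and $\hat{\bm{\Lambda}}^*$), applies the law of large numbers to identify each normalized sum as a KL divergence, and uses that the minimized divergence term vanishes at the population maximizer. The extra care you take over the reference measure in the $\mathrm{KL}\left(\mathcal{M}(\bm{\Lambda}_0)\,\|\,\hat{\mathcal{M}}^k(\bm{\Lambda}_0)\right)$ term and over uniform convergence of the sample objective to justify $\hat{\bm{\Lambda}}\to\hat{\bm{\Lambda}}^*$ is not spelled out in the paper, but it tightens the same argument rather than constituting a different route.
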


The second term in \eqref{miss_err} can be seen as the difference between the true model misspecification error and the model misspecification error of the null hypothesis. So if conditions are satisfied so that the EM algorithm converges to the global maximum as the number of iterations $k \rightarrow \infty$ or if the model is equally as misspecified under the truth as under the null hypothesis such that the differences in the second term in \eqref{miss_err} cancel to 0, then the statistic $\hat{\psi}/T$ is also a consistent estimator of $\Psi$. The justification for using misspecified models can also be geometrically interpreted as follows. Because the models estimated from the EM algorithm are from the correct parametric family of distributions, the misspecified models still lie on the same Riemannian manifold as the correct models. Below, we provide an algorithm for performing hypothesis testing on the statistic $\hat{\psi}$.

\setlength{\extrarowheight}{2pt}
\setlength{\arrayrulewidth}{1pt}
\begin{table}[h]
\centering
\begin{tabular}{l} \hline 
\textbf{Algorithm 1:} Anomaly Test \\ \hline 
\begin{minipage}{.8\linewidth}
\smallskip 
\begin{algorithmic}
	\STATE {\bfseries Input: } models $ \hat{\mathcal{M}}_1^k, \dots, \hat{\mathcal{M}}_T^k$, critical value $c = F^{-1}(\alpha)$ \\ \hspace{24pt} where $F$ is $\chi^2_{p^2-p}$ CDF, $\alpha$ is test level
	\STATE Solve $\hat{\bm{\Lambda}} = \underset{\bm{\Lambda} \in \mathbb{R}^+ }{\arg\max} \, \sum_{t=1}^T \log( \hat{\mathcal{M}}_t^k(\bm{\Lambda})) $
	\STATE $\hat{\psi} = -2 \sum_{t=1}^T \left( \log( \hat{\mathcal{M}}_t^k(\bm{\Lambda}_0) ) - \log(\hat{\mathcal{M}}_t^k(\hat{\bm{\Lambda}})) \right) $
	\IF{$\hat{\psi} > c $}
	\STATE Reject $ \v(\bm{\Lambda}) = \v(\bm{\Lambda}_{0} ) $ 
	\ELSE
	\STATE Do not reject $ \v(\bm{\Lambda}) = \v(\bm{\Lambda}_{0} ) $ 
	\ENDIF
	\STATE {\bfseries Return: } Reject or Not
\end{algorithmic} 
\smallskip
\end{minipage} \\ \hline 
\end{tabular}
\end{table}
Algorithm 1 calculates the statistic $\hat{\psi}$ as a log ratio of the modes of the model under the null and alternative hypothesis. It then tests $\hat{\psi}$ against a critical value $c$, which is related to the false positive level.

Under the null hypothesis, the statistic $\hat{\psi}$ can be decomposed as sampling error $ - 2 \sum_{t=1}^T \log \hat{\mathcal{M}}_t^k(\hat{\bm{\Lambda}}^*) - \underset{\bm{\Lambda} \in \mathbb{R}^+}{\max} \log \hat{\mathcal{M}}_t^k(\bm{\Lambda}) $ plus model error $ -2 \sum_{t=1}^T \log \hat{\mathcal{M}}_t^k(\bm{\Lambda}_0) - \log \hat{\mathcal{M}}_t^k(\hat{\bm{\Lambda}}^*) $.
Thus for the level-$\alpha$ test $\P(\hat{\psi} > c | \mathcal{H}_0) = \alpha$, a Type-I error can occur due to either sampling error or model error or a combination of both. Since typically the finite sample distribution of the statistic $\psi$ is unknown, the asymptotic distribution described in Proposition \ref{prop:pdr} can be used to choose the critical value $c$ of $ \P(\psi > c | \mathcal{H}_0) = \alpha$. Assuming the model error is small, or small relative to the sampling error, we can also use Proposition \ref{prop:pdr} to choose the critical value of a test with a misspecified statistic $\P(\hat{\psi} > c | \mathcal{H}_0) = \alpha$. In the following section, we will show in simulations that the asymptotic distribution of the correct statistic $\psi$ is adequate for choosing the critical value of a test using the misspecified statistic $\hat{\psi} $.

\section{Computational Complexity}

In Algorithm 2, we present our hierarchical Poisson EM model warm started at the MRE estimator and analyze its computational complexity.

\setlength{\extrarowheight}{2pt}
\setlength{\arrayrulewidth}{1pt}
\begin{table}[h]
	\centering
	\begin{tabular}{l} \hline 
		\textbf{Algorithm 2:} HP-MRE \\ \hline 
		\begin{minipage}{.9\linewidth}
			\smallskip 
			\begin{algorithmic}
		\STATE {\bfseries Input: } observations $\mathcal{D} = \{\bm{R}^t, \bm{C}^t, \bm{F}^t \}_{t=1}^T$, test level $\alpha$
		\STATE Initialize: $\hat{\bm{\Lambda}}$ as the solution to \eqref{l1-opt} 
		\REPEAT
		\STATE E-Step: Calculate $\hat{N}_{ij}^{t_{1}}, \hat{N}_{ij}^{t_{2}}, \hat{N}_{ij}^{t_{3}} $ for all $i, j$ in Theorem \ref{thm:hbayes}
		\STATE M-Step: Solve for $\hat{\epsilon}^{\, k+1}_{ij} $ and $ \hat{\Lambda}^{k+1}_{ij} $ for all $i, j$ in Theorem \ref{thm:hbayes}
		\UNTIL{convergence}
		\STATE Test: Calculate $\hat{\psi}$ and reject if it is greater than critical value $c$
		\STATE {\bfseries Return: } Reject or Not
			\end{algorithmic} 
			\smallskip
		\end{minipage} \\ \hline 
	\end{tabular}
\end{table}

Warm starting the EM algorithm at the MRE solution \eqref{l1-opt} requires using interior-point methods, which have polynomial complexity in the number of variables. Since the MRE objective has $P^2$ linear variables and $2 P^2$ second order cone problem variables, the computational cost is of order $\mathcal{O}( \# IP iter (3 P^2)^r )$ where $r$ is the polynomial degree (often 3) and $\# IP iter $ is the number of iterations of the interior point algorithm.

The E-Step consists of calculating the multinomial means using the observed data. Assume that the number of flows in the interior nodes are roughly $P$, so that each of the row sums, column sums, and interior node flows are the summation of $P$ values. Then for each independent time instance $t_{\tau}$, there are $P$ summations of $P$ values in denominator and a multiplication and division operation on each of the $P^2$ entries in the numerator. The total computational cost of the E-step is of order $\mathcal{O}( \tau T P^2)$ where $\tau$ is the number of different time points in Proposition \ref{prop:indep_time_LB} (2 + number of interior nodes). 

In the M-step, the estimator $ \hat{\epsilon}^{\, k+1}_{ij} $ can only be solved numerically because the score function of the negative binomial distribution is a non-linear equation. Because we can derive the gradient of the score function, we can use a trust-region method with a Newton conjugate gradient subproblem (each subproblem has linear complexity in time points). Given $\hat{\epsilon}^{\, k+1}_{ij} $, the estimator $ \hat{\Lambda}^{k+1}_{ij} $ can be solved in closed form \eqref{lambda_map} with scalar operations, making its complexity linear in time points. Thus the total computational cost of the M-step is of order $\mathcal{O}( (1+\# CG iter) T P^2)$ where $\# CG iter$ is the number of conjugate gradient iterations.

Given the final iterations EM estimators, evaluating the models at each $i, j$ entry only involves scalar operations, and getting the log ratio statistic $\hat{\psi} $ requires summing over all $i, j$ entries and the $T$ time points; so the total complexity of the anomaly test statistics is of order $\mathcal{O}( T P^2)$. Thus, overall Algorithm 2 has computational complexity of order $\mathcal{O}( \# IP iter (3 P^2)^r + \# EM iter ( (\tau+1) T P^2 + \# CG iter T P^2) )$. Note that our choice in algorithms for the numerical optimizations were based more on convenience (using popular standard packages e.g. CVX, Matlab's fsolve) than optimal performance, so the computational complexities listed in this section are certainly not the best case scenarios. Nonetheless, even using non-optimal numerical algorithms, we show, in the following section, that our method can run in a reasonable amount of time in both simulations and large real world problems.

\section{Simulation and Data Examples}

In this section, we model network traffic in both simulated and real datasets as hierarchical Poisson posteriors to get estimators of the true network traffic rates. These estimators, from the hierarchical Poisson posteriors where the EM algorithm is initialized randomly or at the MRE estimator (Rand-HP or MRE-HP), are tested against baseline rates to detect anomalous activity in the network, as shown in Algorithm 1. We compare the performance of our proposed models to the maximum likelihood EM (MLEM) model of \cite{Vanderbei94anem} (with the same time independence assumptions of Proposition \ref{prop:indep_time_LB} for feasibility), the Traffic and Anomaly Map (TA-Map) method of \cite{7098434}, and an ``Oracle" that unrealistically observes the network directly. The ``Oracle" estimator is the uniformly minimum variance unbiased estimator and achieves the Cramer-Rao lower bound \cite{casella2002statistical}.

The Traffic and Anomaly Map method is the state-of-the-art for estimating the rates in networks with traffic anomalies. Specifically for the TA-Map method we use the objective of (P1) in \cite{7098434}, but with the low rank decomposition of (P4) in \cite{7098434} where $\bm{X}=\bm{L} \bm{Q}'$ and $\bm{Q} = \bm{1} $ is a vector of ones because the rates do not change over time. Since the anomalies also do not change over time, they can be expanded as $\bm{A} \bm{Q}'$ where $\bm{A}$ is a $P^2 \times 1$ vector of rates of anomalous activity. We use $\bm{\Lambda}_0$ to form the routing matrix for the vector of nominal rates $\bm{L}$ and a full routing matrix for the vector of anomalous rates $\bm{A}$ since we do not know any structural knowledge about them. Additionally, converting the notation of \cite{7098434} to the notation of this paper, $\bm{Y} = [\bm{C}, \bm{R}, \bm{F}]$, $\bm{Z}_\Pi$ are defined as the edges that are observed, and $\bm{L}+\bm{A} = \v(\bm{\Lambda})$, where $\bm {L}$ and $\bm{A}$ are solved using CVX on (P1) in \cite{7098434}. We empirically choose the penalty parameters $\lambda_\star = 0.5$ and $\lambda_1 = 0.1$.

\subsection{Simulation Results}

We simulate networks where the baseline rate matrix has 10 exterior nodes and 2 interior nodes. The probability of an edge between any two nodes in the baseline network is $0.65$, the baseline rates $\Lambda_{0 \, ij}$ are drawn from $Gamma(1.75, 1)$ distributions, and each interior node observes the total flow of a random 7 edges. We consider scenarios where anomalous activity can take place in either the edges or the nodes. In the first scenario, the anomalous activity can cause increases in the rates of some of the edges, new edges to appear or disappear, or both. So, the rates of anomalous activity $\Lambda_{ij} - \Lambda_{0 \, ij}$ are drawn from $Gamma(0.75, 1)$ distributions where the probability of anomalous activity between any two nodes is $0.2$. In the second scenario, there is a hidden node that is interacting with the other nodes, thus affecting the observed total flows of the known nodes. So the entries of the true rate matrix are drawn from $Gamma(1.75, 1)$ distributions, but the true rate matrix has 11 exterior nodes and the baseline rate matrix is the $10 \times 10$ submatrix of known nodes. Like in the first scenario, the probability of an edge between the hidden node and another node is $0.2$. All simulations contain 200 trials, with anomalous activity in approximately half of them. 

In Fig. \ref{fig:acc} we explore the accuracy of correctly identifying anomalous activity as a function of the percentage of observed edges, where we observe $T=100$ time points (samples). We measure accuracy as $ \frac{\#TP + \#TN}{\#Trials} $ where the number of true positives (TP) and true negatives (TN) are the number of times a method correctly detects that there is anomalous activity or no anomalous activity respectively. For the probabilistic models (MLEM, Rand-HP, MRE-HP) , we use the likelihood or posterior density ratio tests described in Section \ref{testing} where the critical value is calculated using the inverse cumulative distribution function of the $\chi^2_{P^2-P} $ distribution at $0.05$. The Traffic and Anomaly Map method uses a threshold on the maximum (absolute) value of the anomaly matrix $\bm{A}$ where the threshold is chosen so that it has $0.05$ Type-I error. While the accuracy of all the probabilistic models increases as the percentage of observed edges increases, the MLEM has low accuracy unless over 80\% of the network is observed whereas the two Hierarchical Poisson models have  high accuracy even when no part of the network is directly observed. The TA-Map method also has poor performance at all percentages of the network observed. This may due to issues the TA-Map method has at separating $\bm {L}$ and $\bm{A}$ into the correct separate matrices even when the total estimator $\bm{L}+\bm{A} $ is accurate.

\begin{figure}[h] 
	\centering
	\includegraphics[width=\linewidth]{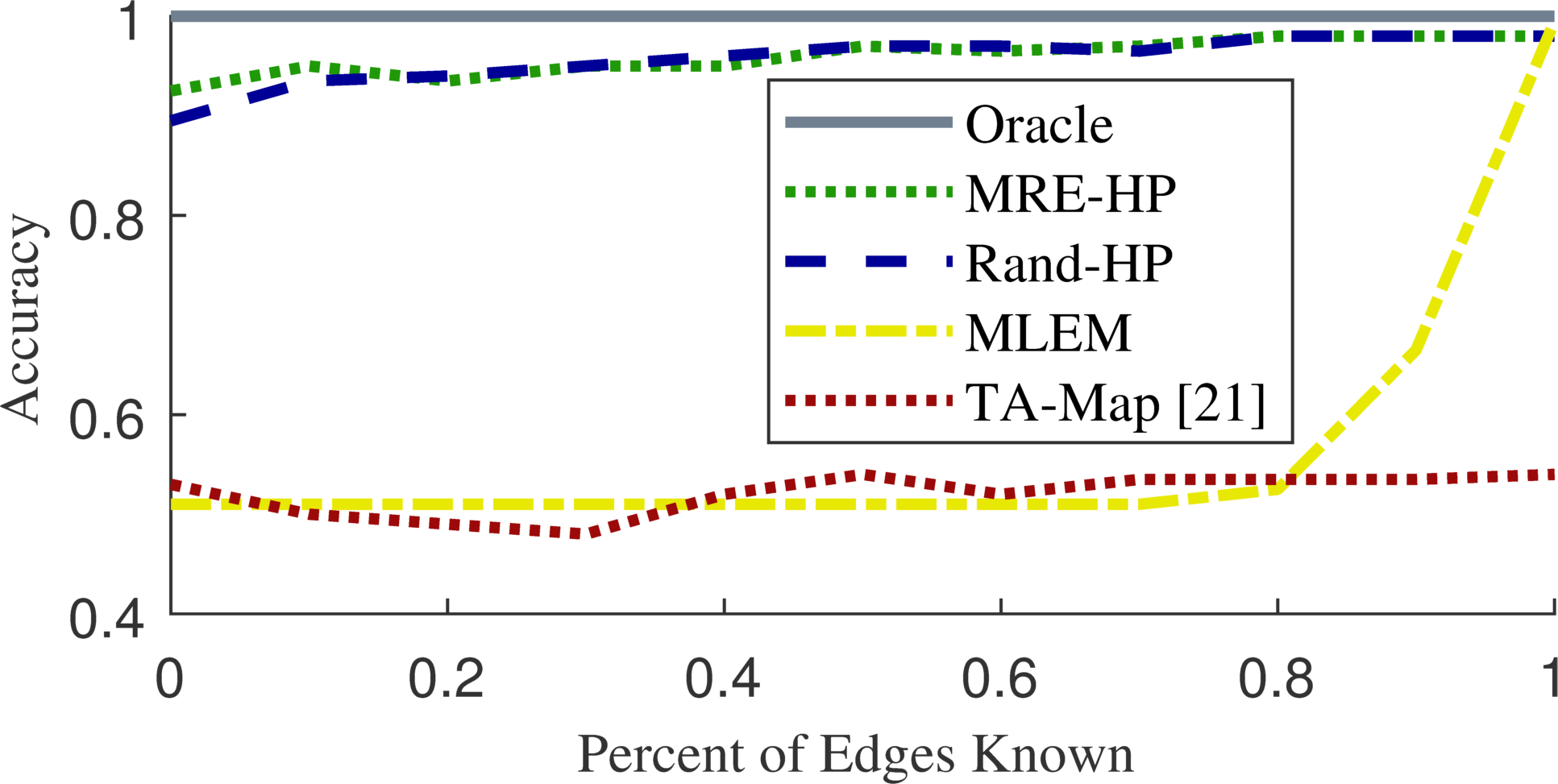}
	\caption{The network has 10 exterior nodes, 2 interior nodes, 35\% sparsity, and a 0.5 probability of having anomalous activity, where $T=100$ samples are observed. The accuracy of correctly detecting if the network has anomalous activity increases as the number of edges observed increases. The proposed Rand-HP, and MRE-HP models outperform the state-of-the-art TA-Map anomaly detector.}
	\label{fig:acc}
\end{figure}

While the Rand-HP and MRE-HP models have approximately the same accuracy at detecting anomalies (MRE-HP does slightly better when only a few of the edges are observed), initializing the EM algorithm of the Hierarchical Poisson model at the MRE solution has additional benefits. Fig. \ref{fig:iter} shows that the EM algorithm in the Hierarchical Poisson model with random initialization takes longer to converge than if it is initialized at the MRE solution. This is because, if the EM algorithm is initialized in a place where likelihood is very noisy, it may have difficultly deciding on the best of the nearby local maxima, but the MRE solution is often already close to a good local maximum. 

Fig. \ref{fig:MSE} shows the mean squared error (MSE) of the estimated rate matrices $ || \hat{\bm{\Lambda}} - \bm{\Lambda} ||_F^2 $. The MRE-HP model gains some of the advantages of the MRE estimator making its MSE much lower than that of the Rand-HP model. As the percentage of observed edges  in the network increases, all estimators' errors decrease to the Oracle estimator's error, which is the lowest possible MSE among all unbiased estimators. However, both the TA-Map method and the MLEM model do not have good performance except when almost all of the network is observed, at which point every estimator performs well. Note that estimating the traffic is not the end goal in the considered anomaly detection problem. We demonstrate this by comparing Fig. \ref{fig:MSE} to Fig. \ref{fig:acc}, where we can see that estimating the traffic well (having low MSE) does not guarantee the method high accuracy. Low MSE implies that a method's estimates do not have a large difference with the true rates, however depending on where the differences occur, it can be enough to cause the method to incorrectly detect anomalous activity. 

\begin{figure}[h] 
	\centering
	\includegraphics[width=\linewidth]{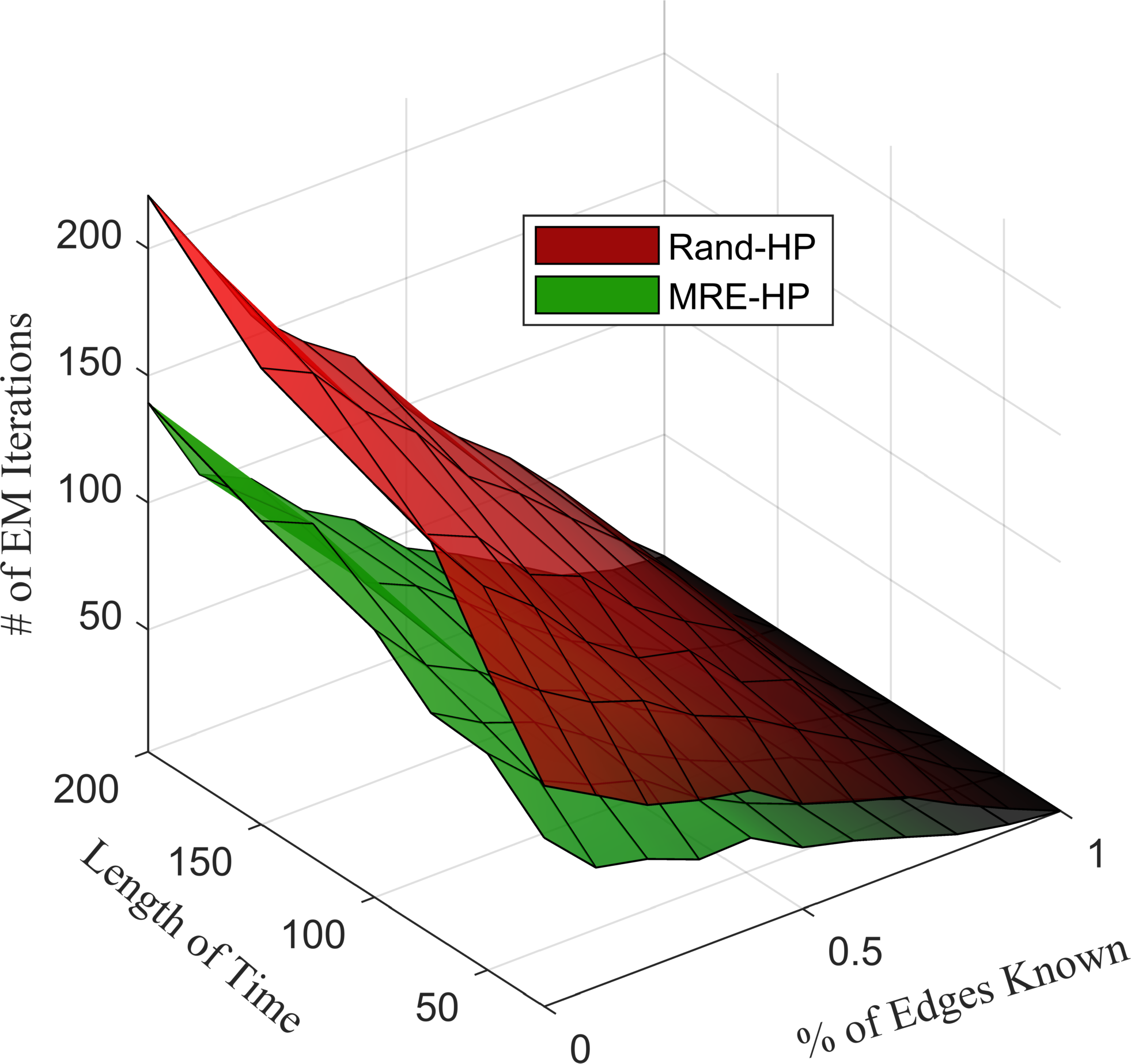}
	\caption{The number of iterations required for the EM algorithm to converge as the observation time and number of edges observed vary. By warm-starting the EM algorithm at the MRE estimator, the number of iteration is much fewer everywhere because it is already close to a good local maximum.}
	\label{fig:iter}
\end{figure}

\begin{figure}[h] 
	\centering
	\includegraphics[width=\linewidth]{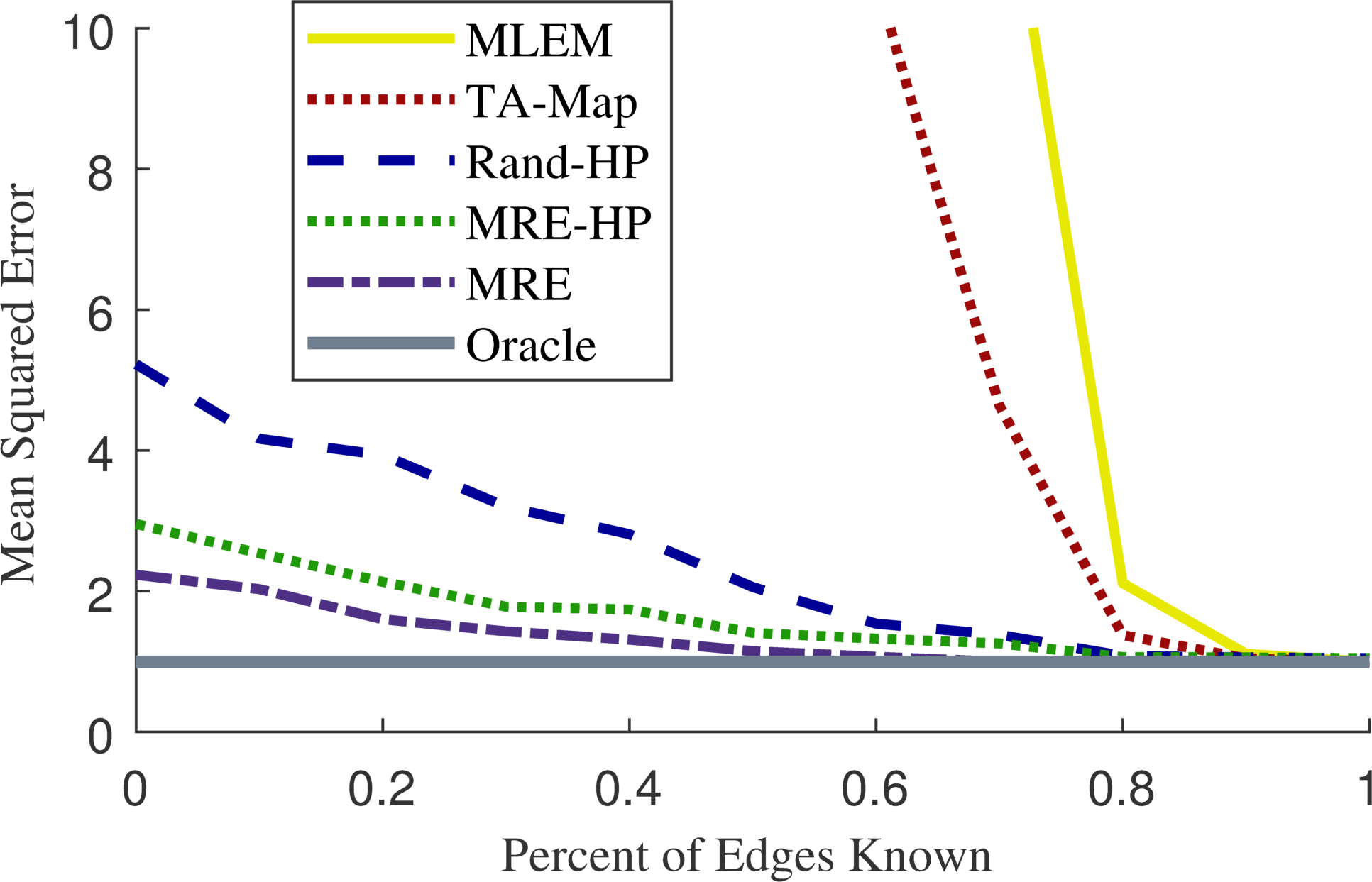}
	\caption{The MSE decreases as the number of edges observed increases. The proposed MRE, Rand-HP, and MRE-HP models outperform the state-of-the-art TA-Map method.}
	\label{fig:MSE}
\end{figure}

Fig. \ref{fig:ROC} shows the ROC curves of the anomaly detection performance of the MRE-HP, MLEM, and TA-Map methods for both the anomalous rates and the hidden node scenarios, where only 20\% of the edges are observed. The accuracy of the MRE-HP model increases with the total observation time $T$, and it can detect anomalous activity almost perfectly with only 100 time points, as evidenced by its area under the curve (AUC) being very close to 1. The stars over the lines are the FPR vs TPR when using the critical values found by calculating the inverse cumulative distribution function of the $\chi^2_{P^2-P} $ distribution at 0.05. The ROC curve for testing a misspecified LR test statistic using the MLEM is just the point at $(1, 1)$ because the Poisson MLE model is so misspecified, it always rejects the null hypothesis. The TA-Map method, while it does not always rejects the null hypothesis like the MLEM model, performs about as bad as random guessing (a diagonal line from $(0, 0)$ to $(1, 1)$). These results are consistent with the accuracy results shown in Fig. \ref{fig:acc}.

\begin{figure}[h]
	\centering
		\includegraphics[width=\linewidth]{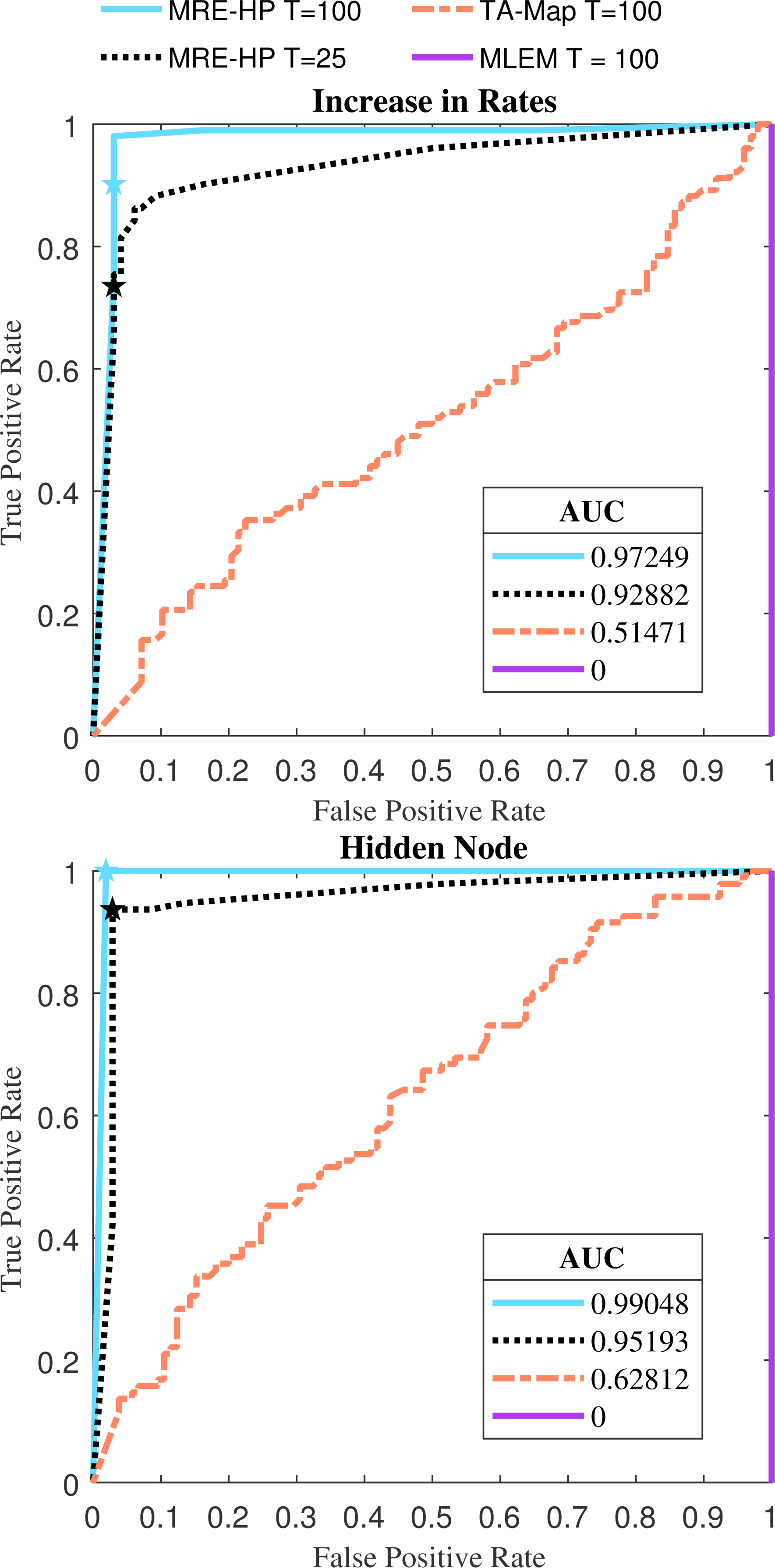}
	\caption{ROC curves where 20\% of edges in the network are observed and roughly half of the networks have anomalous activity. The proposed MRE-HP model can detect anomalous activity almost perfectly while the TA-Map and MLEM methods have poor performance.}
		\label{fig:ROC}
\end{figure}

In Table \ref{Fig7_cpu}, we show the corresponding CPU timings of each method in the two scenarios used in Fig. \ref{fig:ROC}. The algorithms were run on an Intel Xeon E5-2630 processor at 2.30GHz without any explicit parallelization; however some of the built-in Matlab functions are by default multi-threaded (such as ones that call BLAS or LAPACK libraries). While the MRE-HP is slower than the competing methods, its computation time is still very fast and on average less than half a minute. Also, note the significant performance improvement provided by MRE-HP in the considered anomaly detection problem (see Fig. \ref{fig:acc} and Fig. \ref{fig:ROC}).

\setlength\extrarowheight{2pt}
\begin{table}[h]
	\centering
	\caption{Fig. \ref{fig:ROC} CPU Times (in seconds) over 200 Trials}
	\label{Fig7_cpu}
	\begin{tabular}{|c|cc|cc|} \hline
	  & \multicolumn{2}{c|}{Increase in Rates }  & \multicolumn{2}{c|}{Hidden Node} \\ 
		 & Average & Standard Dev. & Average & Standard Dev. \\ \hline
		MRE-HP & 18.594 & 28.611 & 18.901 & 30.785 \\
		MLEM & 0.0398 & 0.0112 & 0.0380 & 0.0119 \\ 
		TA-Map & 3.1860 & 0.1347 & 3.1861 & 0.1912 \\ \hline
	\end{tabular}
\end{table}

\subsection{CTU-13 Dataset}

The proposed model was applied to botnet traffic networks from the CTU-13 dataset, which come from 13 different scenarios of botnets executing malware attacks captured by CTU University, Czech Republic, in 2011 \cite{CTU-13}. The dataset contains real botnet traffic mixed with normal traffic and background traffic and the authors of \cite{CTU-13} processed the captured traffic into bidirectional NetFlows and manually labeled them. Because the objective is to detect if there is botnet traffic among the regular users, we will only use the sub-network of nodes that are being used for normal traffic, but the traffic on this sub-network can be of any type: normal, background, or botnet. Thus, baseline traffic on the network is either normal or background traffic and the anomalous traffic is from botnets. And because the botnet traffic originates and also potentially ceases from nodes that are not the regular users, the anomalous activity is due to unobserved hidden nodes.

The observations consist of the total ingress and egress of each node along with the total flows of 10 interior nodes, where each interior node receives flow from $0.7 P$ other nodes, in addition to observing 20\% of the edges in the network. An observation or sample is all the traffic that occurs in a one-hour time period. For each of the scenarios, we test the probabilistic models at an alpha level of 0.05 under both regimes where the null hypothesis is true (no botnet traffic) and not true (botnet traffic). For the TA-Map method of \cite{7098434}, we use the ROC curves from the simulations to choose the threshold that yields a Type-I error equal to $0.05$. Table \ref{botnet_data} summarizes the characteristics of each of the 13 difference scenarios.

\setlength\extrarowheight{2pt}
\begin{table}[h]
	\centering
	\caption{CTU Network Characteristics}
	\label{botnet_data}
	\begin{tabular}{|c|ccccc|} \hline
	& Time & \# of & \# of Edges & \# of & \# of Edges \\ 
		 	\hspace{-6pt} Scenario \hspace{-6pt}  & $T$ & Nodes & Normal & Hidden & Botnet \\ 
		  & (Hours) & $P$ & Traffic & Nodes & Traffic \\ \hline
		1 & 7 & 510 & 1566 & 2280 & 4428 \\
		2 & 6 & 114 & 249 & 283 & 337 \\ 
		3 & 68 & 333 & 977 & 2463 & 2466 \\ 
		4 & 5 & 414 & 1737 & 9 & 27 \\
		5 & 2 & 246 & 652 & 59 & 67 \\ 
		6 & 3 & 200 & 380 & 2 & 5 \\ 
		7 & 2 & 93 & 161 & 11 & 14 \\ 
		8 & 20 & 3031 & 8799 & 57 & 106 \\ 
		9 & 6 & 485 & 1799 & 706 & 3372 \\ 
		10 & 6 & 260 & 1088 & 25 & 131 \\ 
		11 & 1 & 53 & 162 & 7 & 19 \\
		12 & 2 & 290 & 697 & 861 & 1829 \\
		13 & 17 & 272 & 814 & 267 & 345 \\ \hline
	\end{tabular}
\end{table}

Table \ref{botnet_test} shows that the Hierarchical Poisson model initialized at the MRE solution always correctly rejects the null hypothesis when it is not true. However, the model incorrectly rejects the null hypothesis in Scenario 3. This scenario has far more nodes than any of the other scenarios, and as the number of nodes increase, the number of entries that must be estimated, $\mathcal{O}(P^2)$, vastly outweigh the number of observations, $\mathcal{O}(P)$. This gives rise to a large model misspecification error in this scenario, which would negatively impact the accuracy of Algorithm I. Like in the simulations, the Poisson MLE model always rejects the null hypothesis due to its massive model misspecification error and the TA-Map method also has poor performance in the scenarios that are computationally feasible for the method (the ones marked NA are too computationally expensive). Overall MRE-HP has good performance detecting anomalous activity, especially compared to the other methods.

\setlength\extrarowheight{2pt}
\setlength{\tabcolsep}{5pt}
\begin{table}[h]
	\centering
	\caption{CTU Network Test}
	\label{botnet_test}
	\begin{tabular}{|c|ccc|ccc|} \hline
	\hspace{-6pt} Scenario \hspace{-6pt} & \multicolumn{3}{c|}{When $\mathcal{H}_0$ is True }  & \multicolumn{3}{c|}{When $\mathcal{H}_A$ is True } \\ 
		 & MRE-HP & \hspace{-4pt} MLE \hspace{-4pt} & TA-Map & MRE-HP & \hspace{-4pt} MLE \hspace{-4pt} & TA-Map\\ \hline
		1 & \checkmark & $\times$ & NA & \checkmark & \checkmark & NA \\ 
		2 & \checkmark & $\times$ & $\times$ & \checkmark & \checkmark & \checkmark \\ 
		3 & $\times$ & $\times$ & NA & \checkmark & \checkmark & NA \\ 
		4 & \checkmark & $\times$ & NA & \checkmark & \checkmark & NA \\
		5 & \checkmark & $\times$ & NA & \checkmark & \checkmark & NA \\ 
		6 & \checkmark & $\times$ & NA & \checkmark & \checkmark & NA \\ 
		7 & \checkmark & $\times$ & $\times$ & \checkmark & \checkmark & \checkmark \\ 
		8 & \checkmark & $\times$ & NA & \checkmark & \checkmark & NA \\ 
		9 &\checkmark & $\times$ & NA & \checkmark & \checkmark & NA \\ 
		10 & \checkmark & $\times$ & NA & \checkmark & \checkmark & NA \\ 
		11 & \checkmark & $\times$ & $\times$ & \checkmark & \checkmark & \checkmark \\ 
		12 & \checkmark & $\times$ & NA & \checkmark & \checkmark & NA \\ 
		13 & \checkmark & $\times$ & NA & \checkmark & \checkmark & NA \\ \hline
	\end{tabular}
\end{table}

In Table \ref{botnet_cpu}, we show the CPU timings of the algorithms for the 13 scenarios in the CTU-13 dataset under both hypothesis, where the algorithms are run on the same processor described in the simulations. Even for scenario 8, the computational times of MRE-HP are feasible despite running on a rather out-of-date processor with a low clock speed. Again we mark NA for the scenarios that are computationally infeasible for the TA-Map method (the memory requirements are above 32GB even for scenario 6). The MRE-HP method despite being slower than the TA-Map on smaller networks (see Table \ref{Fig7_cpu}), scales much more efficiently to larger networks.

\setlength\extrarowheight{2pt}
\setlength{\tabcolsep}{5pt}
\begin{table}[h]
	\centering
	\caption{CTU Network CPU Times (in seconds)}
	\label{botnet_cpu}
	\begin{tabular}{|c|ccc|ccc|} \hline
		\hspace{-6pt} Scenario \hspace{-6pt} & \multicolumn{3}{c|}{When $\mathcal{H}_0$ is True }  & \multicolumn{3}{c|}{When $\mathcal{H}_A$ is True } \\ 
		& MRE-HP & \hspace{-4pt} MLE \hspace{-4pt} & TA-Map & MRE-HP & \hspace{-4pt} MLE \hspace{-4pt} & TA-Map\\ \hline
		1 & 513.72 & 25.381 & NA & 1303.2 & 62.148 & NA \\ 
		2 & 19.258 & 5.2586 & 426.69 & 206.71  & 1.0702 & 683.26 \\ 
		3 & 790.60 & 70.706 & NA & 468.39 & 55.564 & NA \\ 
		4 & 2038.0 & 10.834 & NA & 3607.5 & 30.863 & NA \\
		5 & 539.85 & 2.0568 & NA & 263.56 & 3.2602 & NA \\ 
		6 & 69.452 & 1.6095 & NA & 58.427 & 12.556 & NA \\ 
		7 & 10.164 & 0.8487 & 360.09 & 17.043 & 0.6761 & 366.83 \\ 
		8 & 62602 & 8071.2 & NA & 55591 & 2087.8 & NA \\ 
		9 &  5439.3 & 97.082 & NA & 903.46 & 51.762  & NA \\ 
		10 & 648.88 & 7.8440 & NA & 174.66 & 2.9925 & NA \\ 
		11 & 4.2550 & 0.7101 & 55.126 & 17.645 & 0.4735 & 56.367 \\ 
		12 & 1864.8 & 3.6154 & NA & 514.97 & 5.5562 & NA \\ 
		13 & 355.20 & 17.620 & NA & 792.81 & 76.237 & NA \\ \hline
	\end{tabular}
\end{table}

\subsection{Taxi Dataset}

The proposed model was applied to a dataset consisting of yellow and green taxicabs rides from the New York City Taxi and Limousine Commission (NYC TLC) \cite{NYC_TLC} and \cite{taxi_github}. For every NYC taxicab ride, the dataset contains the pickup and drop-off locations as geographic coordinates (latitude and longitude). Green taxicabs are not allowed to pickup passengers below West 110th Street and East 96th Street in Manhattan, but occasionally they risk the chance of getting punished and ignore the regulations. In an article on June 10th 2014, the New York Post explains how the city began hiring more TLC inspectors to catch illegal pickups and enforce the location rules \cite{NY_post}. Thus we are interested in identifying if there are green taxicabs operating in lower Manhattan when we only know the yellow taxicab network. We treat the 18 Neighborhood Tabulation Areas (NTA) in lower Manhattan as nodes and associate any pickups or drop-offs within an NTA's boundaries as traffic entering or leaving the node. We form edges from only frequently occurring routes of traffic, which we define as having activity at least an average of every 20 minutes for yellow taxicabs and twice a month for green taxicabs. For samples, we use the yellow and green taxicab rides from between January and May of 2014 and aggregate them into daily totals. 

Like in the previous example, we indirectly observe samples of the total ingress and egress of each node, and the total flows of 10 interior nodes that each observe the flows of $0.7 P$ nodes. This creates a total traffic network with $P = 18$ nodes and $187$ non-zero edges ($39\%$ sparsity) where the baseline network (yellow taxicab rides) has $163$ of the edges. There is anomalous activity (green taxicab rides) on $56$ of the edges, where $32$ of these edges are also in the baseline network and $24$ are not. We observe the network for a total of $T = 150$ days. Fig. \ref{fig:taxi_network} shows the baseline network formed from yellow taxicab rides and the unknown anomalous activity due to illegal pickups from green taxicabs. 

\begin{figure}[h] 
	\centering
	\includegraphics[width=\linewidth]{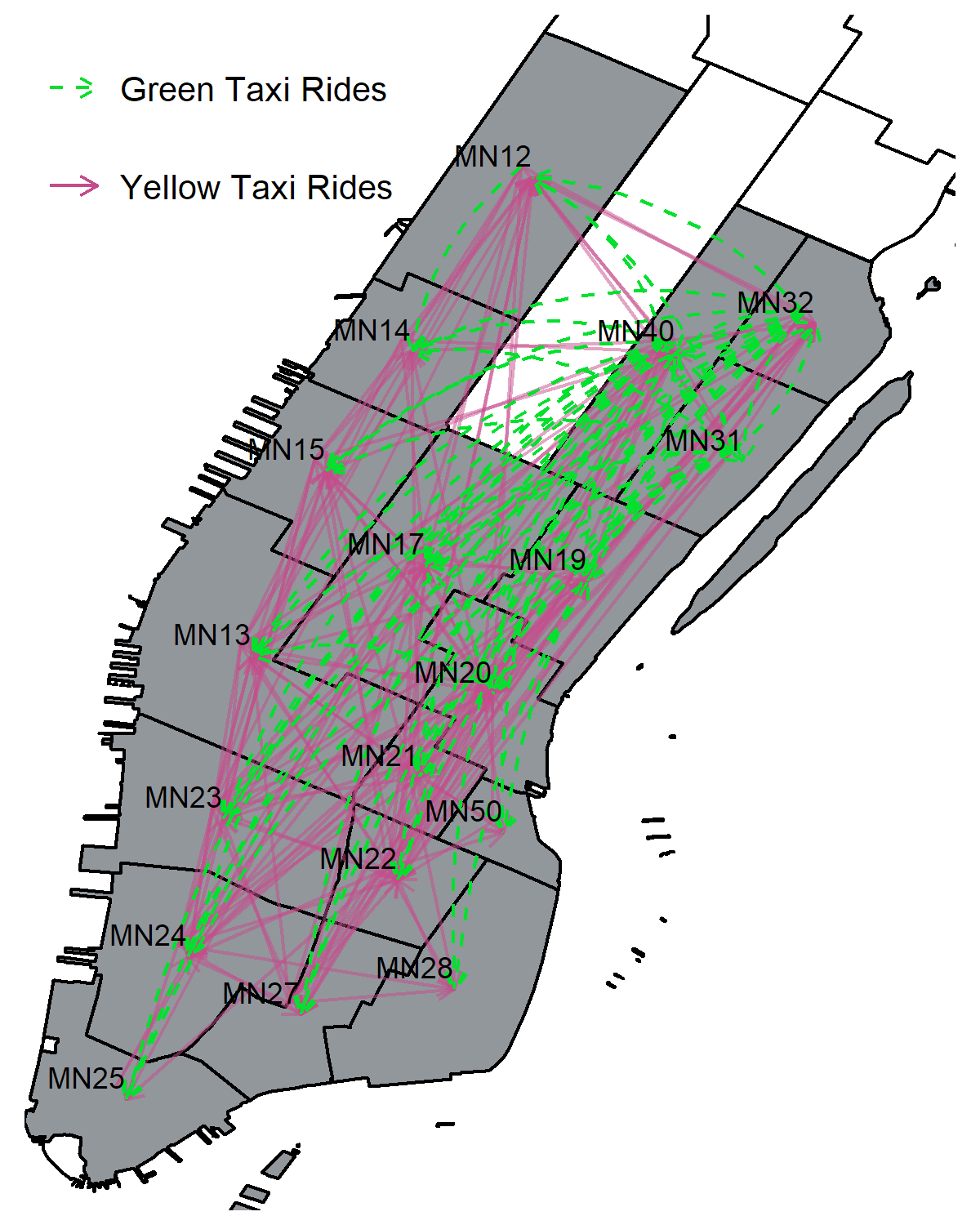}
	\caption{A network of taxicab rides in lower Manhattan where the nodes are the 18 NTAs. The traffic from yellow taxicab rides (solid purple lines) form the baseline network and the traffic from green taxicab rides (dashed green lines) are anomalous activity in the network.}
	\label{fig:taxi_network}
\end{figure}

Table \ref{taxi_test} shows, for different percentages of edges observed, whether the correct decision (reject or not) is made when the null hypothesis is true (no green taxi traffic) and when it is not true (green taxi traffic). The Hierarchical Poisson model initialized at the MRE solution always makes the correct decision while the Poisson MLE model, except for when the network can be directly observed, always rejects the null hypothesis. These two models are tested at an alpha level of $0.05$. The Traffic and Anomaly Map method, which has a $0.05$ Type-I error threshold chosen from the ROC curves of the simulations, also has poor performance.

\begin{table}[h]
	\centering
	\caption{Taxi Network Test}
	\label{taxi_test}
	\begin{tabular}{|c|ccc|ccc|} \hline
		\% & \multicolumn{3}{c|}{When $\mathcal{H}_0$ is True }  & \multicolumn{3}{c|}{When $\mathcal{H}_A$ is True } \\ 
		\hspace{-4pt} Edges \hspace{-4pt} & MRE-HP & \hspace{-4pt} MLE \hspace{-4pt} & TA-Map & MRE-HP & \hspace{-4pt} MLE \hspace{-4pt} & TA-Map \\ \hline
		0 & \checkmark & $\times$ & \checkmark & \checkmark & \checkmark & $\times$ \\ 
		10\% & \checkmark & $\times$ & \checkmark & \checkmark & \checkmark & $\times$ \\ 
		20\% & \checkmark & $\times$ & \checkmark & \checkmark & \checkmark & $\times$ \\ 
		30\% & \checkmark & $\times$ & \checkmark & \checkmark & \checkmark & $\times$ \\
		40\% & \checkmark & $\times$ & \checkmark & \checkmark & \checkmark & $\times$  \\ 
		50\% & \checkmark & $\times$ & \checkmark & \checkmark & \checkmark & $\times$ \\ 
		60\%& \checkmark & $\times$ & $\times$ & \checkmark & \checkmark & $\times$ \\ 
		70\%& \checkmark & $\times$ & $\times$ & \checkmark & \checkmark & \checkmark \\ 
		80\%&\checkmark & $\times$ &  $\times$ & \checkmark & \checkmark & \checkmark\\ 
		90\%& \checkmark & $\times$ &  $\times$ & \checkmark & \checkmark & \checkmark \\ 
		100\%& \checkmark & \checkmark & $\times$ & \checkmark & \checkmark & \checkmark \\ \hline
	\end{tabular}
\end{table}

From the results of Table \ref{taxi_test}, we know the Hierarchical Poisson model initialized at the MRE solution is always able to detect changes in the network at a global scale, but we are also interested in the recovery of the individual green taxicab routes. When 70\% of the network is observed, the model is able to detect 52 of the 56 edges that contain anomalous activity with only a 2\% false positive rate. The 4 missed edges and 5 false alarms are shown in Fig. \ref{fig:edge_errors}. 

\begin{figure}[h] 
	\centering
	\includegraphics[width=\linewidth]{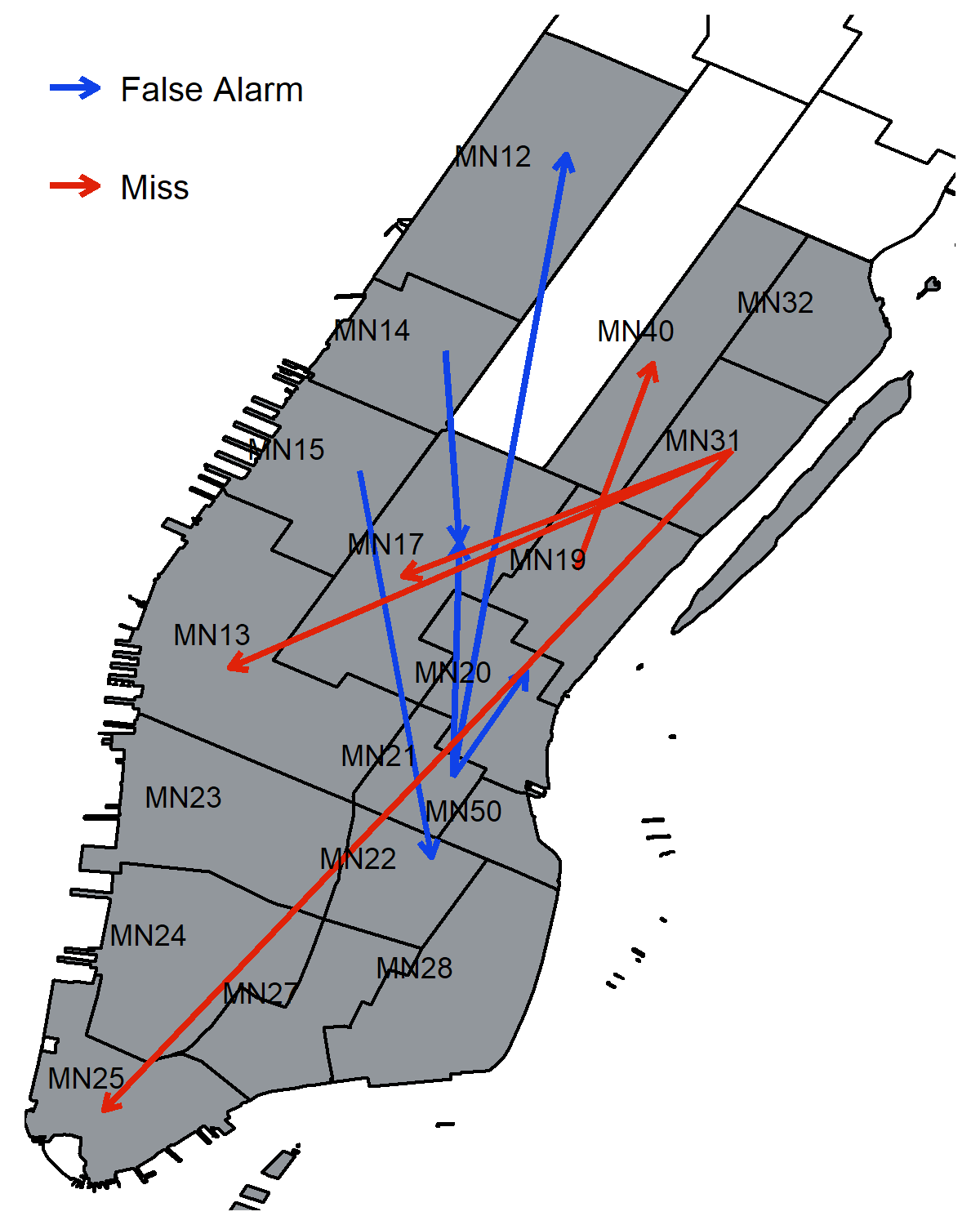}
	\caption{A miss (red line) is an edge that the MRE-HP model fails to identify as containing anomalous activity and a false alarm (blue line) is an edges that is incorrectly identified as containing anomalous activity. The majority of the misses depart from MN31 (Lenox Hill and Roosevelt Island), which may contain legal activity because green taxis are allowed to pick up passengers from Roosevelt Island.}
	\label{fig:edge_errors}
\end{figure}

Out of the 4 misses, 3 of them are from green taxicab pickups from MN31, which contains the Lenox Hill and Roosevelt Island areas. Green taxicabs are allowed to pick up passengers from Roosevelt Island, but not from Lenox Hill, so some of the traffic on these 3 routes could be legal and not anomalous activity. The other miss, from MN19 to MN40, only had 11 rides in 150 days, making it harder to distinguish from just perturbation noise in the samples.

\section{Conclusion}

We have developed a framework and a probabilistic model for detecting anomalous activity in the traffic rates of sparse networks. Our framework is realistic and robust in that, at minimum, it only requires observing the total egress and ingress of the nodes. Because it imposes no fixed assumptions of edge structure, our framework allows the estimator to handle noisy observations and anomalous activity. Our simulation results show the advantages of our model over competing methods in detecting anomalous activity. Through application of our models to the CTU-13 botnet datasets, we show that the model is scalable and robust to various scenarios, and with the NYC taxi dataset, we show an application of our model and framework to an already identified real-world problem.

\appendix

\begin{proof}[Proof of Proposition \ref{prop:indep_time_LB}]
By Jensen's inequality, $ \log \left( \P(\mathcal{D}| \bm{\Lambda}) \right) $
\begin{flalign*} 	
& = \log \left( \prod_{t_{1}=1}^T \P(\bm{R}^{t_{1}}| \bm{\Lambda}) \prod_{t_{2}=1}^T \P(\bm{C}^{t_{2}}| \bm{\Lambda})  \prod_{t_{3}=1}^T \P(\bm{F}^{t_{3}}| \bm{\Lambda}) \right) \\
& \geq \sum_{t_{1}=1}^T \E_{\q^{t_{1}}} \left(\log \P(\bm{R}^{t_{1}} \hspace{-4pt} ,\bm{N}^{t_{1}} | \bm{\Lambda})\right) - \E_{\q^{t_{1}}} \left( \log \q(\bm{N}^{t_{1}}) \right) \\
& + \sum_{t_{2}=1}^T \E_{\q^{t_{2}}} \left( \log \P(\bm{C}^{t_{2}} \hspace{-4pt} , \bm{N}^{t_{2}} | \bm{\Lambda}) \right) -\E_{\q^{t_{2}}} \left( \log \q(\bm{N}^{t_{2}}) \right) \\
& + \sum_{t_{3}=1}^T \E_{\q^{t_{3}}} \left( \log \P(\bm{F}^{t_{3}} \hspace{-4pt} , \bm{N}^{t_{3}}| \bm{\Lambda}) \right) -\E_{\q^{t_{3}}} \left( \log \q(\bm{N}^{t_{3}}) \right) \\
& = \sum_{t_{1}=1}^T \E_{\q^{t_{1}}} \left(\log \P(\bm{R}^{t_{1}} | \bm{N}^{t_{1}} \hspace{-4pt} , \bm{\Lambda})\right) +\sum_{t_{2}=1}^T \E_{\q^{t_{2}}}  \left( \log \P(\bm{C}^{t_{2}}| \bm{N}^{t_{2}} \hspace{-4pt} , \bm{\Lambda}) \right) \\
& + \sum_{t_{3}=1}^T \E_{\q^{t_{3}}} \left( \log \P(\bm{F}^{t_{3}}| \bm{N}^{t_{3}} \hspace{-4pt} , \bm{\Lambda}) \right) \\
& + \sum_{\tau=1}^3 \sum_{t_{\tau}=1}^T \E_{\q^{t_{\tau}}} \left( \log \P( \bm{N}^{t_{\tau}}| \bm{\Lambda}) \right) +  \text{H} \left( \q(\bm{N}^{t_{\tau}}) \right) 
\end{flalign*}
and $ \P(\bm{R}^{t_{1}} \hspace{-2pt} | \bm{N}^{t_{1}} \hspace{-6pt} , \bm{\Lambda}) = \P(\bm{C}^{t_{2}} \hspace{-2pt} | \bm{N}^{t_{2}} \hspace{-6pt} , \bm{\Lambda}) =\P(\bm{F}^{t_{3}} \hspace{-2pt} | \bm{N}^{t_{3}} \hspace{-6pt} , \bm{\Lambda}) = 1 $.
The inequality is tight (by KL divergence) when $ \q(\bm{N}^{t_{1}}) = \P(\bm{N}^{t_{1}} | \bm{R}^{t_{1}} \hspace{-4pt} , \bm{\Lambda}) $, $\q(\bm{N}^{t_{2}}) = \P(\bm{N}^{t_{2}} | \bm{C}^{t_{2}} \hspace{-4pt} , \bm{\Lambda})$, and $ \q(\bm{N}^{t_{3}}) = \P(\bm{N}^{t_{3}} | \bm{F}^{t_{3}} \hspace{-4pt} , \bm{\Lambda}) $ are multinomial distributions.
\end{proof}

\begin{proof}[Proof of Theorem \ref{thm:hbayes}] 
$ $ \\
Define $\mathcal{N} = \left\{\bm{N}^{t_{\tau}}  \hspace{-4pt} : \forall t_{\tau} = 1, \dots, T \text{ and } \tau = 1, \dots, 3 \right\} $ as the set of all network traffic at different time points $t_{\tau}$ for the entire sample window $1, \dots, T$. So $\cap \, \mathcal{N}$ is the intersection of the set and $\P(\cap \, \mathcal{N})$ is its joint probability.
By Jensen's inequality, $ \log \P( \bm{\Lambda} | \mathcal{D} )  $	
\begin{flalign} \label{lower_bound}
& = \log \int \P(\bm{\Lambda}, \bm{\epsilon} | \mathcal{D} ) \, d\bm{\epsilon} = \log \int \frac{\P(\mathcal{D} | \bm{\Lambda}, \bm{\epsilon}) \P(\bm{\Lambda} | \bm{\epsilon}) \P(\bm{\epsilon}) }{\P(\mathcal{D}) } \, d\bm{\epsilon} \notag \\
& = \log\int \left( \int \cdots \int \P \left(\mathcal{D}, \cap \, \mathcal{N} | \bm{\Lambda}, \bm{\epsilon} \right) \frac{ \P(\bm{\Lambda} | \bm{\epsilon}) \P(\bm{\epsilon} )}{\P(\mathcal{D})} d\mathcal{N} \right) d\bm{\epsilon} \notag \\
& = \log\int  \E_{\q} \left( \frac{ \P(\mathcal{D}, \cap \, \mathcal{N} | \bm{\Lambda}, \bm{\epsilon} ) }{ \prod_{\tau=1}^3 \prod_{t_{\tau}=1}^T \q(\bm{N}^{t_{\tau}}) } \frac{ \P(\bm{\Lambda} | \bm{\epsilon}) \P(\bm{\epsilon} )}{\P(\mathcal{D})} \right) d\bm{\epsilon} \notag \\
& \ge \log \int \exp \bigg\{  \E_{\q} \bigg( \log \P(\mathcal{D}, \cap \, \mathcal{N} | \bm{\Lambda}, \bm{\epsilon} ) + \log \P(\bm{\Lambda} | \bm{\epsilon}) \notag \\
& \hspace{72pt} + \log \frac{ \P(\bm{\epsilon} )}{\P(\mathcal{D})} - \sum_{\tau=1}^3 \sum_{t_{\tau}=1}^T \log \q(\bm{N}^{t_{\tau}}) \bigg) \bigg\} d\bm{\epsilon} \notag \\
& = \log \hspace{-3pt} \int \hspace{-3pt} \exp \left\{  \E_{\q} \left( \log \frac{\P(\mathcal{D}, \cap \, \mathcal{N}, \bm{\Lambda}| \bm{\epsilon}) }{\P(\mathcal{D}, \cap \, \mathcal{N} | \bm{\epsilon}) } \frac{\P(\mathcal{D}, \cap \, \mathcal{N} | \bm{\epsilon}) \P(\bm{\epsilon} )}{\P(\mathcal{D})} \right) \right\} d\bm{\epsilon} \notag \\
& + \log \left( \exp \left\{ -  \E_{\q} \left( \sum_{\tau=1}^3 \sum_{t_{\tau}=1}^T \log \q(\bm{N}^{t_{\tau}}) \right) \right\} \right) \notag \\
& = \log \hspace{-3pt} \int \hspace{-3pt} \exp \left\{  \E_{\q} \left( \log \P( \bm{\Lambda} | \cap \, \mathcal{N} , \bm{\epsilon}) \right) +  \E_{\q} \left( \log \P( \cap \, \mathcal{N} , \bm{\epsilon} | \mathcal{D}) \right) \right\} \hspace{-2pt} d\bm{\epsilon} \notag \\
& + \sum_{\tau=1}^3 \sum_{t_{\tau}=1}^T \text{H} \left( \q(\bm{N}^{t_{\tau}}) \right)  \\
& = \log \hspace{-3pt} \int \hspace{-3pt} \exp \left\{  \E_{\q} \left( \log \P( \bm{\Lambda} | \cap \, \mathcal{N} , \bm{\epsilon}) \right) + \E_{\q} \left( \log \P(\bm{\epsilon} | \cap \, \mathcal{N} ) \right) \right\} \hspace{-2pt} d\bm{\epsilon} \notag
\end{flalign}
where this bound is tight (by KL divergence) when $\q = \P( \cap \, \mathcal{N} | \mathcal{D}, \bm{\Lambda}, \bm{\epsilon} ) $ \\ $= \prod_{\tau = 1}^3 \prod_{t_{\tau}=1}^T \P(\bm{N}^{t_{\tau}} | \bm{R}^{t_{\tau}}, \bm{\Lambda}) \P(\bm{N}^{t_{\tau}} | \bm{C}^{t_{\tau}}, \bm{\Lambda}) \P(\bm{N}^{t_{\tau}} | \bm{F}^{t_{\tau}}, \bm{\Lambda}) $ are multinomial distributions. And, maximizing \\ $ \E_{\q} \left( \log \P(\cap \, \mathcal{N}, \bm{\epsilon} | \mathcal{D}) \right) $
 \begin{flalign*} 
 & =  \E_{\q} \left( \log \P( \mathcal{D}|\cap \, \mathcal{N} , \bm{\epsilon} ) + \log \P(\cap \, \mathcal{N} , \bm{\epsilon} ) - \log \P(\mathcal{D}) \right) \\
 & =  \E_{\q} \left( \log(1) + \log \P(\cap \, \mathcal{N} | \bm{\epsilon} ) \right) + \log \P(\bm{\epsilon} ) - \log \P(\mathcal{D}) \\
 & = \log \P(\bm{\epsilon} ) - \log \P(\mathcal{D}) + \sum_{\tau=1}^3 \sum_{t_{\tau}=1}^T \E_{\q^{t_{\tau}}} \log \P(\bm{N}^{t_{\tau}} | \bm{\epsilon})
 \end{flalign*}
 is equivalent to maximizing a lower bound of $\log \P( \bm{\epsilon} | \mathcal{D} ) $
 \begin{flalign*} 
 & = \log \frac{\prod_{t_{1}=1}^T \prod_{t_{2}=1}^T \prod_{t_{3}=1}^T \P(\bm{R}^{t_{1}}, \bm{C}^{t_{2}}, \bm{F}^{t_{3}}| \bm{\epsilon}) \P(\bm{\epsilon}) }{\P(\mathcal{D})} \\
 & = \log \P(\bm{\epsilon}) - \log \P(\mathcal{D}) + \log \prod_{t=1}^T \E_{\q^{t_{1}}} \left(\frac{ \P(\bm{R}^{t_{1}} \hspace{-4pt},\bm{N}^{t_{1}}| \bm{\epsilon}) }{ \q^{t_{1}}(\bm{N}^{t_{1}}) } \right) \\
 & + \log \hspace{-2pt} \prod_{t_{2}=1}^T \hspace{-2pt}  \E_{\q^{t_{2}}} \hspace{-2pt} \left( \frac{\P(\bm{C}^{t_{2}} \hspace{-4pt}, \bm{N}^{t_{2}}| \bm{\epsilon}) }{ \q^{t_{2}}(\bm{N}^{t_{2}}) } \right) \hspace{-2pt} + \log \hspace{-2pt} \prod_{t_{3}=1}^T \hspace{-2pt} \E_{\q^{t_{3}}} \hspace{-2pt} \left(\frac{ \P(\bm{F}^{t_{3}} \hspace{-4pt}, \bm{N}^{t_{3}}| \bm{\epsilon}) }{ \q^{t_{3}}(\bm{N}^{t_{3}}) } \right) \\
 & \geq \log \P(\bm{\epsilon}) - \log \P(\mathcal{D}) + \sum_{{t_{1}}=1}^T \E_{\q^{t_{1}}} ( \log \P(\bm{R}^{t_{1}} | \bm{N}^{t_{1}} ) \hspace{-4pt}, \bm{\epsilon})\\
 & + \sum_{t_{2}=1}^T \E_{\q^{t_{2}}} \left( \log \P(\bm{C}^{t_{2}} | \bm{N}^{t_{2}} \hspace{-4pt}, \bm{\epsilon}) \right) + \sum_{t_{3}=1}^T \E_{\q^{t_{3}}} \left( \log \P( \bm{F}^{t_{3}} | \bm{N}^{t_{3}} \hspace{-4pt} ,\bm{\epsilon}) \right) \\
 & + \sum_{\tau=1}^3 \sum_{t^{(\tau)}=1}^T \E_{\q^{t_{\tau}}} \left( \log \P(\bm{N}^{t_{\tau}} | \bm{\epsilon}) \right) - \E_{\q^{t_{\tau}}} \left( \log \q(\bm{N}^{t_{\tau}}) \right) \\
& \propto \log \P(\bm{\epsilon} ) - \log \P(\mathcal{D}) + \sum_{\tau=1}^3 \sum_{t_{\tau}=1}^T \E_{\q^{t_{\tau}}} \left( \log \P(\bm{N}^{t_{\tau}} | \bm{\epsilon}) \right) 
 \end{flalign*}
 for any distributions of $ \q(\bm{N}^{t_{1}}) , \q(\bm{N}^{t_{2}}) , \q(\bm{N}^{t_{3}}) $. \\
 
 \noindent Since $ \bm{N}_{ij}^{t_{\tau}} | \epsilon_{ij} \sim NegBin(\epsilon_{ij} \Lambda_{0 \, ij} + 1, \frac{1}{1+\epsilon_{ij}}) $ is the negative binomial distribution and $\epsilon_{ij} \sim Unif(0, \infty)$, the M-step is $\hat{\epsilon_{ij}}$
 \begin{flalign*} 
 & = \underset{\epsilon_{ij}}{\arg\max} \, \log \P(\epsilon_{ij}) + \sum_{\tau=1}^3 \sum_{t_{\tau}=1}^T \E_{\q^{t_{\tau}}} \left( \log \P(\bm{N}_{ij}^{t_{\tau}} | \epsilon_{ij}) \right) \\
 & \propto \underset{\epsilon_{ij}}{\arg\max} \, \sum_{\tau=1}^3 \sum_{t_{\tau}=1}^T \E_{\q^{t_{\tau}}} \left( \log \Gamma(N_{ij}^{t_{\tau}}+ \epsilon_{ij} \Lambda_{0 \, ij} + 1) \right) \\
 & + \log(\epsilon_{ij}) 3T (\epsilon_{ij} \Lambda_{0 \, ij} + 1) - \log(1+\epsilon_{ij}) 3T (\epsilon_{ij} \Lambda_{0 \, ij} + 1) \\
 &- 3T \log \Gamma(\epsilon_{ij} \Lambda_{0 \, ij} + 1) -\log(1+\epsilon_{ij}) \sum_{\tau=1}^3 \sum_{t_{\tau}=1}^T \E_{\q^{t_{\tau}}} (N_{ij}^{t_{\tau}}) \\
 & \ge \underset{\epsilon_{ij}}{\arg\max} \, 3T \left( \hspace{-2pt} (\epsilon_{ij} \Lambda_{0 \, ij} + 1) \log \frac{\epsilon_{ij}}{ 1+\epsilon_{ij}} - \log \Gamma(\epsilon_{ij} \Lambda_{0 \, ij} + 1) \hspace{-2pt} \right) \\
 & + \sum_{\tau=1}^3 \sum_{t_{\tau}=1}^T \log \Gamma(\E_{\q^{t_{\tau}}} (N_{ij}^{t_{\tau}}) + \epsilon_{ij} \Lambda_{0 \, ij} + 1) \\
 & -\log(1+\epsilon_{ij}) \sum_{\tau=1}^3 \sum_{t^{\tau}=1}^T \E_{\q^{t_{\tau}}} (N_{ij}^{t_{\tau}}) 
\end{flalign*}
and given estimates of the hyperparameters $\hat{\epsilon}_{ij}$, estimators for the rates $\hat{\Lambda}_{ij}$
 \begin{flalign*} 
 & = \underset{\Lambda_{ij}}{\arg\max} \,  \E_{\q} \left( \log \P( \cap \, \mathcal{N} | \bm{\Lambda}, \hat{\bm{\epsilon}}) + \log \P( \bm{\Lambda}| \hat{\bm{\epsilon}}) - \log \P( \cap \, \mathcal{N} ) \right) \\
 & \propto \underset{\Lambda_{ij}}{\arg\max} \, \log \P(\bm{\Lambda} | \hat{\bm{\epsilon}}) + \sum_{\tau=1}^3 \sum_{t_{\tau}=1}^T \E_{\q^{t_{\tau}}} \left( \log \P(\bm{N}^{t_{\tau}} | \bm{\Lambda}) \right) \\
 & \propto \underset{\Lambda_{ij}}{\arg\max} \, (\hat{\epsilon}_{ij} \Lambda_{0\, ij} ) \log(\Lambda_{ij}) - \hat{\epsilon}_{ij} \Lambda_{ij} -3T \Lambda_{ij} \\
 & + \sum_{\tau=1}^3 \sum_{t_{\tau}=1}^T \E_{\q^{t_{\tau}}} (N_{ij}^{t_{\tau}}) 
 \end{flalign*}
 
Thus when $\E_{\q^{t_{1}}} (N_{ij}^t) = \E(N_{ij}^{t_{1}} | \bm{R}^{t_{1}} , \hat{\bm{\Lambda}}^{k}) $ where $\hat{\bm{\Lambda}}^{k}$ are the previous iterations' estimators for the rate matrix, the lower bound will push up against the observed log posterior $ \log \P(\bm{\Lambda} | \mathcal{D})$. This makes the E-step just the means of the independent Multinomial distributions $ \prod_{i=1}^P Multi(R_{i}^{t_{1}}, \frac{\hat{\Lambda}_{i1}^{k}}{\sum_{j=1}^P \hat{\Lambda}_{ij}^{k} }, \dots, \frac{\hat{\Lambda}_{iP}^{k}}{\sum_{j=1}^P \hat{\Lambda}_{ij}^{k} })$ like in the previous models. The same holds when given the column sums $ \bm{C}^{t_{2}} $ or flows $ \bm{F}^{t_{3}} $.
\end{proof}

\begin{proof}[Proof of Proposition \ref{thm:mre}]
	
	The positive estimator $\hat{\bm{\Lambda}} $ that maximizes the MRE distribution is the solution to
	\begin{flalign*} 
	& = \underset{\bm{\Lambda} \in \mathbb{R}^+ }{\arg\max} \, \log \left( \text{P}(\bm{\Lambda} | \bm{R}, \bm{C}, \bm{F}) \right) \\
	& = \underset{\bm{\Lambda} \in \mathbb{R}^+ }{\arg\max} \, \log ( \prod_{ij} \exp \left\{ -| \Lambda_{ij} - \Lambda_{0ij} | \right\} ) -\log(Z\left(\bm{\rho}, \bm{\gamma}, \bm{\phi}) \right) \\
	& + \log (\exp\{ \hat{\bm{\rho}}' (\bm{\Lambda} \bm{1} - \bar{\bm{R}}) + \hat{\bm{\gamma}}' (\bm{1}' \bm{\Lambda} - \bar{\bm{C}}) + \hat{\bm{\phi}}' (\bm{A} \bm{\Lambda} \bm{B} - \bar{\bm{F}})\}) \\
	& = \underset{\bm{\Lambda} \in \mathbb{R}^+ }{\arg\max} \, - \sum_{ij} | \Lambda_{ij} - \Lambda_{0ij} | + \hat{\bm{\rho}}' (\bm{\Lambda} \bm{1} - \bar{\bm{R}}) + \hat{\bm{\gamma}}' (\bm{\Lambda}'\bm{1} - \bar{\bm{C}}) \\
	&  + \hat{\bm{\phi}}' (\bm{A} \bm{\Lambda} \bm{B} - \bar{\bm{F}}) \\
	& = \underset{\bm{\Lambda} \in \mathbb{R}^+ }{\arg\min} \, ||\bm{\Lambda} - \bm{\Lambda}_0||_1 - \hat{\bm{\rho}}' (\bm{\Lambda} \bm{1} - \bar{\bm{R}}) - \hat{\bm{\gamma}}' (\bm{\Lambda}' \bm{1} - \bar{\bm{C}}) \\
	& - \hat{\bm{\phi}}' (\bm{A} \bm{\Lambda} \bm{B} - \bar{\bm{F}}) 
	\end{flalign*}
	where $|| \cdot ||_1$ is the element wise $\ell_1$ norm and the optimal Lagrange multipliers $\hat{\bm{\rho}}, \hat{\bm{\gamma}}, \hat{\bm{\phi}} $ are the solution to
	\begin{flalign} \label{dual}
	& = \underset{\bm{\rho}, \bm{\gamma}, \bm{\phi}}{\arg\max} \, -\log \left(Z(\bm{\rho}, \bm{\gamma}, \bm{\phi}) \right) \\
	& = \underset{\bm{\rho}, \bm{\gamma}, \bm{\phi}}{\arg\max} \, \sum_{i=1}^P \rho_i \bar{R}_i +\sum_{j=1}^P \gamma_j \bar{C}_j +\sum_{h} \phi_h \bar{F}_h - \log2 \notag \\
	& - \sum_{ij} \Lambda_{0ij}(\rho_i+\gamma_j+ \sum_{h}\phi_h A_{hi} B_j) \notag \\
	& +\log(1+LM_{ij})+\log(1-LM_{ij} ) \notag \\
	& = \underset{\bm{\rho}, \bm{\gamma}, \bm{\phi}}{\arg\max} \, \sum_{i=1}^P \rho_i (\bar{R}_i - \sum_{j=1}^P \Lambda_{0ij}) + \sum_{j=1}^P \gamma_j (\bar{C}_j - \sum_{i=1}^P \Lambda_{0ij}) \notag \\
	& + \sum_{h} \phi_h ( \bar{F}_h - \sum_{ij} A_{hi} \Lambda_{0ij} B_j) +\sum_{ij} \log(1 -LM_{ij}^2) \notag 
	\end{flalign} 
	where $LM_{ij} = \rho_i+\gamma_j+\sum_{h}\phi_h A_{hi} B_j$.
	
	The Lagrangian of the loss function in \eqref{l1-opt} is $ ||\bm{\Lambda} - \bm{\Lambda}_0||_1 \\ + \bm{\rho}' (\bm{\Lambda} \bm{1} - \bar{\bm{R}}) + \bm{\gamma}' (\bm{1}' \bm{\Lambda} - \bar{\bm{C}}) + \bm{\phi}' (\bm{A} \bm{\Lambda} \bm{B} - \bar{\bm{F}}) $
	with optimal Lagrange multipliers that are the solution to dual problem
	\begin{flalign*} 
	& = \underset{\bm{\rho}, \bm{\gamma}, \bm{\phi}}{\arg\max} \, - \sum_{ij} f^*(-\rho_i-\gamma_j-\sum_{h}\phi_h A_{hi} B_j) - \sum_{i=1}^P \rho_i \bar{R}_i  \\
	& - \sum_{j=1}^P \gamma_j \bar{C}_j - \sum_{h} \phi_h \bar{F}_h \\
	& = \underset{\bm{\rho}, \bm{\gamma}, \bm{\phi}}{\arg\max} \, \sum_{ij} \Lambda_{0ij}(LM_{ij} ) - \sum_{i=1}^P \rho_i \bar{R}_i - \sum_{j=1}^P \gamma_j \bar{C}_j  - \sum_{h} \phi_h \bar{F}_h \\
	& \hspace{66pt} \text{ subject to } | LM_{ij} | < 1 \,\, \forall i,j
	\end{flalign*} 
	because $ f^*(-\rho_i-\gamma_j-\sum_{h}\phi_h A_{hi} B_j) $ are the convex conjugates defined as
	\begin{flalign*} 
	& = \underset{\Lambda_{ij}}{\max} -\Lambda_{ij}(\rho_i+\gamma_j+\sum_{h}\phi_h A_{hi} B_j) - | \Lambda_{ij} - \Lambda_{0ij} | \\
	& = \underset{\Lambda_{ij}}{\max} \begin{cases}
	&\Lambda_{0ij} -\Lambda_{ij}(1 + LM_{ij} ) \qquad \text{ if } \quad \quad \Lambda_{ij} \geq \Lambda_{0ij} \\
	& \Lambda_{ij} (1 - LM_{ij} ) - \Lambda_{0ij} \qquad \text{ if } \quad \quad \Lambda_{ij} < \Lambda_{0ij}
	\end{cases} \\
	& = \begin{cases}
	& \infty \hspace{16pt} \text{ if } \quad \qquad | \rho_i+\gamma_j+\sum_{h}\phi_h A_{hi} B_j | > 1 \\
	& -\Lambda_{0ij}(\rho_i+\gamma_j+\sum_{h}\phi_h A_{hi} B_j) \qquad \text{ otherwise}.
	\end{cases}
	\end{flalign*}
	The dual can be relaxed with log barrier terms to an unconstrained problem that is equivalent to \eqref{dual} making minimizing the Lagrangian of \eqref{l1-opt} for $\bm{\Lambda}$ equivalent to maximizing the MRE distribution.
	
\end{proof}

\begin{proof}[Proof of Proposition \ref{prop:pdr}] 
Using Remark 1.7 of \cite{watanabe2009algebraic}, then for regular models, the MAP estimator will have the same asymptotic properties as the MLE. Thus, the standard proof for the asymptotic distribution for the log likelihood ratio \cite{wilks} applies to the log posterior density ratio.
\end{proof}

\begin{proof}[Proof of Proposition \ref{prop:miss}] 
Let $\mathcal{M}(\bm{\Lambda}^*)$ be the true model, then the test statistic $\psi$
\begin{flalign*}
& =-2 \sum_{t=1}^T \log(\mathcal{M}_t(\bm{\Lambda}_0) ) - \log(\mathcal{M}_t(\hat{\bm{\Lambda}})) \\
& = -2 \left( \sum_{t=1}^T \log( \mathcal{M}_t(\bm{\Lambda}_0) ) - \underset{\bm{\Lambda} \in \mathbb{R}^+}{\max} \sum_{t=1}^T \log(\mathcal{M}_t(\bm{\Lambda})) \right)\\
& = 2 \sum_{t=1}^T \log(\mathcal{M}_t(\bm{\Lambda}^*) ) - \log( \mathcal{M}_t(\bm{\Lambda}_0) ) \\
& -2 \, \underset{\bm{\Lambda} \in \mathbb{R}^+}{\min} \sum_{t=1}^T \log(\mathcal{M}_t(\bm{\Lambda}^*) ) - \log(\mathcal{M}_t(\bm{\Lambda})) \\
& \text{and as } T \rightarrow \infty, \, \psi / T \\
& \rightarrow 2 \, \text{KL}\left( \mathcal{M}(\bm{\Lambda}^* || \mathcal{M}(\bm{\Lambda}_0) \right) -2 \underset{\bm{\Lambda} \in \mathbb{R}^+}{\min} \, \text{KL} \left( \mathcal{M}(\bm{\Lambda}^*) || \mathcal{M}(\bm{\Lambda}) \right) \\
& = 2 \, \text{KL}\left( \mathcal{M}(\bm{\Lambda}^* || \mathcal{M}(\bm{\Lambda}_0) ) \right) = \Psi
\end{flalign*}

The misspecified test statistic $\hat{\psi}$
\begin{flalign} \label{decompose}
 & =-2 \sum_{t=1}^T \log( \hat{\mathcal{M}}_t^k(\bm{\Lambda}_0) ) - \log(\hat{\mathcal{M}}_t^k(\hat{\bm{\Lambda}})) \notag \\
& = -2 \sum_{t=1}^T \log( \hat{\mathcal{M}}_t^k(\bm{\Lambda}_0) ) - \underset{\bm{\Lambda} \in \mathbb{R}^+}{\max} \sum_{t=1}^T \log(\hat{\mathcal{M}}_t^k(\bm{\Lambda})) \notag \\
& = 2 \sum_{t=1}^T \log(\mathcal{M}_t(\bm{\Lambda}^*) ) - \log( \mathcal{M}_t(\bm{\Lambda}_0) ) \\
& \hspace{12pt} +2 \sum_{t=1}^T \log(\mathcal{M}_t(\bm{\Lambda}_0) ) - \log( \hat{\mathcal{M}}_t^k(\bm{\Lambda}_0) ) \notag \\
& \hspace{12pt} -2 \sum_{t=1}^T \log(\mathcal{M}_t(\bm{\Lambda}^*) ) - \log( \hat{\mathcal{M}}_t^k(\hat{\bm{\Lambda}}^*) ) \notag \\
& \hspace{12pt} -2 \underset{\bm{\Lambda} \in \mathbb{R}^+}{\min} \sum_{t=1}^T \log(\hat{\mathcal{M}}_t^k(\hat{\bm{\Lambda}}^*) ) - \log(\hat{\mathcal{M}}_t^k(\bm{\Lambda})) \notag \\
& \text{and as } T \rightarrow \infty, \, \hat{\psi}/T \notag \\
& \rightarrow 2 \, \text{KL}\left( \mathcal{M}(\bm{\Lambda}^*) || \mathcal{M}(\bm{\Lambda}_0) \right) +2 \, \text{KL} \left( \mathcal{M}(\bm{\Lambda}_0) || \hat{\mathcal{M}}^k(\bm{\Lambda}_0) \right) \notag \\
& -2 \, \text{KL} \left( \mathcal{M}(\bm{\Lambda}^*) || \hat{\mathcal{M}}^k(\hat{\bm{\Lambda}}^*) \right) -2 \, \underset{\bm{\Lambda} \in \mathbb{R}^+}{\min} \, \text{KL} \left( \hat{\mathcal{M}}^k(\hat{\bm{\Lambda}}^*) || \hat{\mathcal{M}}^k(\bm{\Lambda}) \right) \notag \\
& = \Psi - 2 \left( \text{KL} \left( \mathcal{M}(\bm{\Lambda}^*) || \hat{\mathcal{M}}^k(\hat{\bm{\Lambda}}^*) \right) - \text{KL} \left( \mathcal{M}(\bm{\Lambda}_0) || \hat{\mathcal{M}}^k(\bm{\Lambda}_0) \right) \right) \notag 
\end{flalign}
where $ \Psi = 2 \, \text{KL}\left( \mathcal{M}(\bm{\Lambda}^* || \mathcal{M}(\bm{\Lambda}_0) ) \right) $ and $ \hat{\mathcal{M}}^k(\hat{\bm{\Lambda}}^*) $ is the closest population local maximum at iteration $k$. If as $ k \rightarrow \infty $, the EM model $ \hat{\mathcal{M}}^k $ converges to the true model $\mathcal{M}$ , then $ \hat{\psi}/T \rightarrow \Psi $ 
	
\end{proof}

\section*{Acknowledgment}
This work was supported in part by the Consortium for Verification Technology under Department of Energy National Nuclear Security Administration award number DE-NA0002534, in part by the University of Michigan ECE Departmental Fellow, in part by the U.S. National Science Foundation (NSF) under grant CNS-1737598, and in part by the Southeastern Center for Electrical Engineering Education (SCEEE) under grant SCEEE-17-03.

\ifCLASSOPTIONcaptionsoff
 \newpage
\fi



\bibliographystyle{IEEEtran}
\bibliography{refs}
\end{document}